\newtheorem{theorem}{Theorem}[section]
\newtheorem{definition}[theorem]{Definition}
\newtheorem{example}[theorem]{Example}
\newtheorem{lemma}[theorem]{Lemma}
\newtheorem{proposition}[theorem]{Proposition}
\newenvironment{proof}[1][Proof]{\noindent\textbf{#1.} }{\ \rule{0.5em}{0.5em}}
\numberwithin{equation}{section}
\begin{document}

\title{Generalized Green Functions and current correlations in the TASEP}
\author{A.M. Povolotsky$^{1,\dag}$, V.B. Priezzhev$^{1,\ddag}$, G.M. Sch{\"u}tz$^{2,\sharp}$}
\maketitle

\abstract{We study correlation functions of  the totally asym\-metric simple exclusion process (TASEP) in discrete time
with back\-ward sequen\-tial update. We prove a determinantal formula for the generalized Green function which describes
transitions between positions of particles at different individual time moments. In particular, the generalized
Green function defines a probability measure at staircase lines on the space-time plane.
The marginals of this measure are the TASEP correlation functions in the space-time region not covered by the standard
Green function approach. As an example, we calculate the current correlation function that is the joint probability
distribution of times taken by selected particles to travel given distance.
An asymptotic analysis  shows that current fluctuations converge to the ${Airy}_2$ process.}
\\  \\

{\footnotesize \noindent $^{~1}$Bogoliubov Laboratory of Theoreticl Physics,
Joint Institute for Nuclear Research, Dubna 141980, Moscow Region, Russia}
\\ \\
{\footnotesize \smallskip \noindent $^{~2}$Forschungszentrum J\"ulich GmbH,
Institut f\"ur Festk\"orperforschung, D--52425 J\"ulich, Germany, and \newline
Interdisziplin\"ares Zentrum f\"ur Komplexe Systeme, Universit\"at Bonn,
Ger\-ma\-ny}
\\ \\
\noindent e-mail addresses:\\ \begin{minipage}{10cm} $^\dag$\texttt{alexander.povolotsky@gmail.com},\\$^\ddag$\texttt{priezzvb@theor.jinr.ru},\\ $^\sharp$\texttt{g.schuetz@fz-juelich.de}\end{minipage}

\section{Introduction}

The totally asymmetric simple exclusion process (TASEP) is one of the most studied models of non-equilibrium
statistical mechanics \cite{Spohn,Derrida,Liggett,SchutzBook}. A rough guide in this field would distinguish
between continuous time and discrete time processes. The latter are subdivided into several groups in accordance
with updating rules: parallel, sequential, sublattice parallel, etc.\cite{update}. The detailed microscopic
information about the time evolution in the TASEP is provided by the transition probability or the Green function
of the master equation. The Green function describing the continuous time evolution of $N$ particles on the
infinite one-dimensional lattice   has been obtained in the form of  a determinant of a $N\times N$ matrix in
\cite{Sch1}. For periodic boundary conditions the Green function is given by the $N$-fold sum of similar
determinants \cite{priezPRL}. The determinantal expression of the Green function for the parallel update has been
found in \cite{PovPri1} by the Bethe ansatz techniques and in \cite{BFS} by induction
in time. The ring geometry for the parallel update was considered in \cite%
{PovPri2}. The model we deal with in this paper is the discrete time TASEP with sequential update. A formal
derivation of the determinant for the Green function for this process has been undertaken in
\cite{priezzhev_traject} by a geometrical treatment of the Bethe ansatz and in \cite{RS} directly using
the Bethe ansatz.

The Green function defines a probability  measure on the set of particle configurations.  The correlation
functions, i.e. distributions of positions of selected particles, are  marginals of this measure. Recently,
considerable progress has been achieved in the calculation of correlation functions. In the pioneering work by
 Johansson,  \cite{Johansson},  the distribution of the position  of a tagged particle  at a
given time was obtained for the TASEP with parallel update and  step initial conditions. The technique used involved
several mappings: from the TASEP to the last passage percolation problem, then to the problem of the longest
nondecreasing subsequence of a generalized permutation, and, finally, via Robinson-Schensted-Knuth's
correspondence,
 to the statistics of pairs of Young tableaux.
 In contrast, the same result can be obtained by direct summation of the
Green function over the whole set of final configurations constrained within
a suitably chosen domain. In this complementary way, Johansson's result was
generalized to the TASEP with backward sequential update in \cite{RS} and extended
to different initial conditions for the continuous time TASEP  in \cite{Nagao}.
The distributions
were presented in the form of Fredholm determinants with polynomial kernels and were proved to converge asymptotically
  to the universal
Tracy-Widom distributions of the largest eigenvalues in the Gaussian random matrix ensembles, unitary and orthogonal
for step and flat initial conditions respectively. The Tracy-Widom distributions were shown to be universal
scaling functions of the Kardar-Parisi-Zhang (KPZ) universality class \cite{KPZ}.

The direct summation of the Green function hardly helps for the multi-particle distributions. Instead, it was
observed in \cite{Sasamoto} that the determinantal formula for the Green function  dependent on $N$ particle
coordinates can be treated as the marginal of an auxiliary determinantal measure on much larger space of
$N(N-1)/2$ variables, which are identified with  coordinates of fictitious particles with free fermionic
interaction. The marginals of this measure are calculable using the construction of the determinantal point process,
from where the correlation functions of interest follow. Similarly to the one-particle distributions, they
 can be represented in the form of  Fredholm determinants and which asymptotically converge to
the universal scaling limits defining the  Airy$_1$, Airy$_2$, e.t.c. processes, which were claimed to be the
hallmarks of the KPZ class. The results for joint distributions of positions of several particles at a given time
have been obtained in   \cite{BFPS,BFP,BFS1} for different updates and initial conditions. The reduction to a
suitable determinantal process turned out to be efficient  also for the calculation of more general correlation
functions \cite{BorodinRains}. In particular,  the  chain of mappings similar to that from the Johansson's work
was used in \cite{ImamSasamoto} to find the auxiliary determinantal measure, a.k.a. the Schur process, which allowed a
calculation of the   distribution of positions of a tagged particle at different moments of time.

The most general correlation functions, which refer to the positions  of selected particles at selected, though
generally different, time moments, were obtained in \cite{BFS,BorodinFerrari}. The authors considered a
cluster of particles initially compact, hopping to the right. Its evolution was sliced into as many slices as
the number of selected time
moments, each slice connecting two subsequent time moments. Then the correlation function  was constructed as a
convolution of the  Green functions, where each Green function described the evolution within a single time slice and
fixed the position  of another selected particle.  Relying on the fact that in the TASEP the rightward motion of
particles is independent of the particles on the left of them, each Green function described only the evolution
of particles on the right of the particle position fixed at the previous step.  This structure stipulates a weak
ordering of the time moments within the correlation function: the more to the left is the initial position of the
particle, the earlier is the corresponding time moment. Here "weak" refers to the fact that the times
(particles) selected for different particles (times) are also allowed to  be  equal, which reproduces the known
cases of fixed time (tagged particle) correlations respectively. This ordering was named
space-like \cite{BFS,BorodinFerrari}. Such a multilevel
construction gave rise to even more complicated auxiliary determinantal point process than for the fixed time
case, which finally resulted in the correlation functions on space-like point configurations. Of course, the
limiting processes obtained from the asymptotical analysis of correlation functions at the space-like paths have
the expected KPZ-specific form.

The space-like ordering  used  in \cite{BFS,BorodinFerrari} is a technical condition originating from the
method used. Therefore, one expects that it would not be a constraint for the KPZ universality. Indeed, basing
on the observed slow decorrelation phenomena, it was proved in \cite{Ferrari,CorwinFerrariPeche} that the
limiting processes are  the same along any space-time path, except special directions given by the characteristic
lines of the hydrodynamic conservation law associated to the large scale particle dynamics.  However,  no
microscopic results on correlation functions beyond the distributions of particle positions on the space-like
paths were available up to date.

The aim of the present paper is to extend the set of known correlation functions by using a generalization of the
conventional Green function, which  describes transitions between space-time point configurations where different
particles are associated with different  time moments and  different space positions. Such a generalization was
proposed in \cite{Brankov}. It was shown by means of a trajectory treatment of the Bethe ansatz that, under
special conditions on space-time points of initial and final configurations, the generalized Green function has a
determinantal representation similar to that of the regular Green function. The idea now is to supply the
generalized Green function with a meaning of probability measure on some set of space-time point configurations
passed during  the TASEP evolution, and to calculate the TASEP correlation functions
as  the marginals of this  measure. We partially implement this program here. We define the space-time point sets
on staircase lines  named boundaries by requiring them to satisfy two conditions. First, the Green function
describing the transition
between point configurations within  these sets is of the determinatal form. Second, the TASEP
evolution  has a Markov property at the boundary needed to claim that the probability of a space-time
trajectory of the system factorizes into the product of probabilities of the two parts obtained by cutting the
particle trajectories along the boundary. Due to this fact, the required probability measure can be
defined as the probability of configurations of space-time points from where the particle trajectories go
after they have traversed the boundary. Remarkably, the time coordinates of points in the configurations available at the
boundaries are ordered in an opposite way than those in the space-like configurations, and, hence, we have a new
class of configurations to deal with. They, however, do not provide a full complement to the set of space-like
configurations from  \cite{BFS,BorodinFerrari}, because there is also a restriction on the space coordinates of the
points, which are required to be strictly decreasing with the number of a particle.

After discussing the general construction, giving the probability measure on the new class of configurations, we
still need to present a determinantal process to proceed with the correlation functions. The further analysis
is limited to a particular example of the generalized boundary, where $N$ particles exit from $N$ fixed subsequent
space points at arbitrary time.  For the step initial conditions, this suggests that all particles make the same
number of steps. Then, the correlation functions of interest are the joint distributions of times taken by
selected particles to travel a given distance.  For this case we obtain the corresponding determinantal measure
and complete the calculation of correlation functions. We should note that similar  auxiliary determinantal
measure for the same problem in the case of parallel update was  proposed in \cite{ImamSasamoto}, where the term
current correlations was used. Therefore, we keep on the same terminology here. The auxiliary determinantal
measure was obtained in \cite{ImamSasamoto} using the above mentioned chain of mappings. This method is
restricted to the step initial conditions only. At the same time, the generalized Green function give a more
general tool applicable to a wider class of initial conditions and finial configurations. Therefore, the results
of the present paper on current correlation functions should be viewed  as a test example of the method used. The more
general cases will be considered elsewhere.

The paper is organized as follows. In Section 2 we define the discrete time TASEP with sequential update and
present some basic properties. Then we proceed with the derivations of the Green functions. To be self-contained,
we start with a regular derivation of the Green function for sequential update, using an inductive solution of
the Chapman-Kolmogorov equation. This is done in Section 3. In the Section 4, we introduce the
generalized  Green function. In the first part of this section, Proposition \ref{FDisc}, we present a rigorous
proof of the determinantal formula first proposed in \cite{Brankov}. The formula is restricted to the case when
initial and final point configurations are of the special form, which we call admissible, defined by the ordering
of coordinates in time and space. In the second part of Section 4 we introduce the boundary sets, where the
admissible configuration live. We prove  that the generalized Green functions define the
probability measure at the boundaries and possesses the Markov property, implying that the convolution of two Green functions performed over
the boundary yields another Green function. These results are summarized in  Propositions \ref{GGF prob}
 In Section 5, we use the Sasamoto's formalism of auxiliary variables \cite{Sasamoto} to rewrite
the generalized Green function in the form of a signed determinantal point process and write its correlation
kernel in the Proposition \ref{det_measure}. The main results of the paper on the current correlations are given
in Section 6.  Theorem \ref{fredholm} gives  the joint distribution of particle currents in the form of
Fredholm determinant. The asymptotic form of this determinant and convergence to the $Airy_2$ process is the
content of  Theorem \ref{Airy_2}. The latter is given without a detailed  mathematical proof. Only the sketch
of the saddle-point analysis of the kernel is presented in  Lemma \ref{scaling limit}. A conclusion
and a review of perspectives are in Section 7. In particular an application of the generalized Green functions to
more general situations is briefly discussed.

\section{Discrete time TASEP}

Consider $N$ particles on the one-dimensional integer lattice. A configuration of the system
\begin{equation}
\bm{x}\in \mathbb{Z}^N_>
\end{equation}
takes its values in the set
\begin{equation}
\mathbb{Z}^N_>= \left\{(x_1,\dots,x_N) \in \mathbb{Z}^N : x_{1}>x_{2}>\dots
>x_{N}\right\},  \label{range}
\end{equation}%
which is an $N$-tuple of strictly increasing coordinates of particles $%
(x_1,\dots,x_N)$. The strictly increasing order reflects the exclusion condition that two particles cannot occupy
the same site.

The Markovian dynamics of this process for a single discrete time step can be defined as follows. Starting from
the rightmost particle at position $x_1 < \infty$, we let each particle jump to its right neighbouring site with
probability $p$, provided that the target site is empty. If the target site is occupied, the probability for a
particle to stay is $1$. Therefore the
TASEP is a random process which is given as a sequence of configurations $%
\bm{x}^{0},\bm{x}^{1},\dots ,\bm{x}^{t}$. We refer to such a sequence as a trajectory of the system up to time
$t$. Every trajectory is realized with probability
\begin{equation}
P(\bm{x}^{0},\dots ,\bm{x}^{t})=P_1(\bm{x}^{t}|\bm{x}^{t-1})\dots P_1(\bm{x}%
^{2}|\bm{x}^{1})P_1(\bm{x}^{1}|\bm{x}^{0})P_{0}(\bm{x}^{0}),
\end{equation}%
where $P_{0}(\bm{x})$ is the initial probability of configuration $\bm{x}$.
The one-step transition probability $P_1(\bm{x}|\bm{y})$ from configuration $%
\bm{y}$ to $\bm{x}$ takes the form
\begin{equation}
P_1(\bm{x}|\bm{y})=\prod\limits_{i=1}^{N}\theta \left( x_{i}-y_{i},x_{i-1}-y_{i}\right) ,  \label{weights}
\end{equation}%
where
\begin{eqnarray}
\theta \left( k,l\right) &=&\left( q+p \delta_{l,1}\right) \delta _{k,0}+p\delta _{k,1},  \label{weights_12}
\end{eqnarray}%
and we formally define $x_0=\infty$. The parameter $p$, the hopping probability, varies in the range
\begin{equation}
0 <p<1,  \label{p range}
\end{equation}%
and we also define
\begin{equation}
q=1-p.  \label{q}
\end{equation}%
In other words, we exclude the trivial deterministic limiting cases $p=0,1$ from our consideration. The above
transition probabilities correspond to the discrete time dynamics with backward sequential update, when at each
time step the position of the first particle is updated first, then of the second particle, etc. Since the
hopping is totally asymmetric, the first $N$ particles do not ``feel'' the presence of any particle to their
left. Hence, when considering the dynamics of the first $N$ particles, one may study without loss of generality
the system with exactly $N$ particles initially. Then for any fixed $t$ the number of configurations and also the
number of possible trajectories (with non-zero probability) is finite.

\section{Green function}

\label{Green function}

Consider the probability for the system to be in a configuration $\bm{x}$ after $t$ time steps, $t\geq 0$,
\begin{equation}
G_t(\bm{x}|\bm{y})= \sum_{\{\bm{x}^0,\bm{x}^1,\dots,\bm{x}^{t-1}\}}P(\bm{x}%
^0,\dots,\bm{x}^t),  \label{green}
\end{equation}
given
\begin{equation}
P_0(\bm{x}^0)=\delta_{\bm{y},\bm{x}^0}.
\end{equation}
The sum is taken over all trajectories, the final configuration being fixed
\begin{equation}
\bm{x}^t\equiv \bm{x}.
\end{equation}
The transition probability satisfies the recurrence relation
\begin{equation}
G_t(\bm{x}|\bm{y})=\sum_{\bm{x'}\in\mathbb{Z}^N_>} G_1(\bm{x}|\bm{x'}) G_{t-1}(\bm{x'}|\bm{y}),  \label{master}
\end{equation}
which is a particular case of the Chapman-Kolmogorov equation. It follows directly from the definitions
(\ref{green}) and (\ref{weights}) that
\begin{eqnarray}
G_0(\bm{x}|\bm{x^0})&=&\delta_{\bm{x},\bm{x^0}}  \label{in_cond} \\
G_1(\bm{x}|\bm{x'})&=& P_1(\bm{x}|\bm{x'}) .  \label{G_1}
\end{eqnarray}
An explicit form of $G_t(\bm{x},\bm{y})$ for arbitrary time $t$ can be found
as a Green function (GF) of the equation (\ref{master}) that satisfies (\ref%
{in_cond}), (\ref{G_1}). The solution is particularly simple in the case of the TASEP, having the form of a
determinant of an $N\times N$ matrix. A first determinantal formula for the GF $G_t(\bm{x},\bm{y})$ has been
found by Sch\"utz \cite{Sch1} for the continuous time TASEP. A similar formula for the TASEP with backward
sequential update has been obtained by the use of trajectory treatment of the Bethe Ansatz
\cite{priezzhev_traject}. As pointed out in \cite{RS} it can be proved rigorously using the determinantal
approach of \cite{Sch1}. For the sake of self-containedness and as introduction to the approach used further
below for multi-time correlations we present here a complete proof of this formula by induction.

\begin{proposition}
\label{Green_fun_equal_time} The GF of (\ref{master}) has the determinantal form
\begin{equation}
G_t(\bm{x}|\bm{y})=\det[F_{j-i}(x_i-y_j,t)]_{i,j=1,\dots,N}, \label{Green_det_equal time}
\end{equation}
where
\begin{equation}  \label{FDisc}
F_n(x,t)=\left\{{\
\begin{array}{ll}
\frac{1}{2\pi \mathrm{i}}\oint_{\Gamma_0}\frac{dw}{w} \left(q+\frac{p}{w}%
\right)^t(1-w)^{-n}w^x, & t\geq0 \\
0, & t<0%
\end{array}%
} \right.,
\end{equation}
and the integration contour $\Gamma$ encircles the origin, while the point w=1 stays outside.
\end{proposition}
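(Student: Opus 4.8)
The plan is to prove the formula by induction on $t$, verifying that the determinant $D_t(\bm{x}|\bm{y}):=\det[F_{j-i}(x_i-y_j,t)]$ satisfies the initial condition (\ref{in_cond}) together with the one-step recurrence (\ref{master}), (\ref{G_1}); since these determine $G_t$ uniquely, this identifies $D_t$ with $G_t$. The groundwork is two elementary identities for the scalar functions $F_n$, both obtained by manipulating the integrand in (\ref{FDisc}). Splitting off one factor $(q+p/w)$ from the $t$-th power gives the temporal recursion $F_n(x,t)=qF_n(x,t-1)+pF_n(x-1,t-1)$, which is just the single-particle hopping rule. Writing $(1-w)^{-(n-1)}=(1-w)(1-w)^{-n}$ gives the index-shift recursion $F_{n-1}(x,t)=F_n(x,t)-F_n(x+1,t)$, equivalently $F_n(x+1,t)=F_n(x,t)-F_{n-1}(x,t)$. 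These are the only properties of $F_n$ that the argument uses.

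For the base case $t=0$ I would evaluate the contour integrals in (\ref{FDisc}) directly: $F_0(x,0)=\delta_{x,0}$, while for $n\neq 0$ the integral is supported on a bounded range of $x$ which, combined with the strict orderings $x_1>\dots>x_N$ and $y_1>\dots>y_N$, forces the matrix $[F_{j-i}(x_i-y_j,0)]$ to reduce to the identity when $\bm{x}=\bm{y}$ and to a determinant that vanishes otherwise. This is a routine computation yielding $D_0(\bm{x}|\bm{y})=\delta_{\bm{x},\bm{y}}$.

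The substance is the inductive step, namely the identity $D_t(\bm{x}|\bm{y})=\sum_{\bm{x}'\in\mathbb{Z}^N_>}P_1(\bm{x}|\bm{x}')\,D_{t-1}(\bm{x}'|\bm{y})$. First I would split the one-step weight (\ref{weights}), (\ref{weights_12}) into a free hopping part $q\delta_{x_i',x_i}+p\delta_{x_i',x_i-1}$ and an exclusion correction $p\,\delta_{x_i',x_i}\delta_{x_i,x_{i-1}-1}$ arising from the $\delta_{l,1}$ term. The crucial observation is that if the free part alone were summed over the \emph{unrestricted} lattice $\mathbb{Z}^N$, multilinearity of the determinant would let each sum act on a single row, and the temporal recursion would turn row $i$ into $F_{j-i}(x_i-y_j,t)$, producing exactly $D_t(\bm{x}|\bm{y})$. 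Hence the claim reduces to showing that the exclusion corrections, summed over the ordered domain $\mathbb{Z}^N_>$, cancel precisely the free contributions of the forbidden non-ordered configurations $\bm{x}'\in\mathbb{Z}^N\setminus\mathbb{Z}^N_>$.

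This cancellation is the main obstacle. A configuration $\bm{x}'$ carries nonzero free weight only when each $x_i'\in\{x_i,x_i-1\}$, and such a $\bm{x}'$ fails to be ordered only at an adjacent pair $x_{i-1}=x_i+1$, where the front particle has hopped back onto the site of its follower, producing a collision $x_{i-1}'=x_i'=x_i$ with free weight $pq$ times the remaining factors. The matching exclusion correction instead keeps both particles fixed, so the front particle sits at $x_{i-1}'=x_i+1$; it carries the same prefactor $pq$, but $D_{t-1}$ is now evaluated with the front particle's argument raised by one. The index-shift recursion reconciles the two: applying $F_n(x+1,t-1)=F_n(x,t-1)-F_{n-1}(x,t-1)$ entrywise to the row of the front particle rewrites that row as the corresponding row at the collided argument minus the row of the follower, and subtracting one row from another leaves the determinant unchanged. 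Thus the correction term and the forbidden free term yield identical determinants and cancel. I would carry this out for a single adjacency and then handle the general case—several simultaneously blocked pairs generating nested corrections—by induction on the number of active adjacencies, each reduction being the same row operation. Combining the base case with this one-step identity closes the induction on $t$ and identifies $D_t$ with $G_t$.
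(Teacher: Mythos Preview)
Your proof is correct and follows essentially the classical Sch\"utz/Bethe-ansatz route, which differs in organisation from the paper's argument. Both proofs rest on the same two identities (\ref{FIden1}), (\ref{FIden2}), but the paper never splits $P_1$ into ``free $+$ correction''. Instead it expands the \emph{determinantal} one-step object $G_1(\bm{x}|\bm{x}')=\det[F_{j-i}(x_i-x_j',1)]$ in permutations and uses a separate cycle lemma (Lemma~\ref{cycles}) to show that only trivial cycles and certain ``cluster'' cycles $C_k$ survive; it then matches each $C_k$ against the corresponding free-particle contribution $D_k$ coming from the weakly ordered sector $\mathbb{Z}^N_\geq\setminus\mathbb{Z}^N_>$, and finally checks $G_1=P_1$ by a further induction over cluster length. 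Your route bypasses the cycle analysis entirely: you work directly with $P_1$, and the index-shift identity becomes the ``boundary condition'' $D_{t-1}(\ldots,z{+}1,z,\ldots)=D_{t-1}(\ldots,z,z,\ldots)$ that identifies each correction term with the corresponding forbidden free term. This is shorter and avoids the separate verification $G_1=P_1$; the paper's longer detour, however, develops the cycle machinery (Lemma~\ref{cycles}) that it reuses in Section~4 for the generalized Green function.

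Two places in your sketch deserve a word of care. The $t=0$ step is not quite as ``routine'' as you suggest: the claim that $\det[F_{j-i}(x_i-y_j,0)]=0$ for $\bm{x}\neq\bm{y}$ in $\mathbb{Z}^N_>$ is exactly what Lemma~\ref{cycles}(1) establishes, and while it can be done by a direct rank argument using the support conditions on $F_n(\cdot,0)$, it needs to be written out. Second, for several adjacent pairs the expansion of $\prod_i(\text{free}_i+\text{corr}_i)$ produces cross terms, and the set of forbidden free configurations is indexed by nonempty subsets of colliding pairs; the bijection you describe does work, but the induction you allude to should be stated so that the ordering constraint on the sum over $\mathbb{Z}^N_>$ (which forces neighbours of a ``corrected'' particle to stay put) is tracked explicitly.
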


\begin{proof}
The proposition can be proved in three steps. First we prove that the r.h.s.
of (\ref{Green_det_equal time}) satisfies the initial conditions (\ref%
{in_cond}), then that it satisfies the recurrence relation (\ref{master}), and finally show that it correctly
defines $G_1(\bm{x}|\bm{x'})$ according
to (\ref{G_1}), so that it can be used to obtain $G_t(\bm{x}|\bm{y})$ for $%
t>1$.

We first expand the determinant representation of $G_t(\bm{x}|\bm{y})$ given in r.h.s. of (\ref{Green_det_equal
time}) into a sum over permutations:
\begin{equation}
G_t(\bm{x}|\bm{y})=\sum_{\sigma\in S_N}(-1)^{|\sigma|} \prod_{i=1}^{N}F_{\sigma_i-i}(x_i-y_{\sigma_i},t),
\label{sumdet}
\end{equation}
where $\sigma=(\sigma_1,\dots,\sigma_N)$ is a permutation of $N$ integers $%
(1,\dots,N)$ and the summation runs over all elements of the symmetric group $S_N$. To show that
$G_t(\bm{x}|\bm{y})$ satisfies the initial conditions we need to evaluate this sum in the particular case $t=0$.
To this end, we
refer to the first statement of the Lemma \ref{cycles} proved in appendix %
\ref{cycles appendix}. It directly follows from this statement that the only summand which can be nonzero in the
r.h.s. of (\ref{sumdet}) in the case $t=0 $ is the one consisting of the diagonal elements of the matrix
$\left[F_{j-i}(x_i-y_j,0)\right]_{1\leq i,j \leq N}$. Then the initial conditions are straightforwardly verified,
\begin{equation}
G_0(\bm{x}|\bm{y})=\delta_{\bm{x},\bm{y}}.
\end{equation}
This completes the first step of the proof.

To prove the recurrence relation (\ref{master}) for the r.h.s of (\ref%
{Green_det_equal time}) we use recurrence relations \cite{RS} for the functions $F_n(x,t)$:
\begin{equation}  \label{FIden1}
F_n(x,t)=qF_n(x,t-1)+pF_n(x-1,t-1),
\end{equation}
\begin{equation}  \label{FIden2}
F_n(x+1,t)=F_{n}(x,t)-F_{n-1}(x,t).
\end{equation}
Using (\ref{FIden1}) we can expand the determinant in the r.h.s. of (\ref%
{Green_det_equal time}) into a sum of similar determinants, where we have
either $F_{i-j}(x_j-y_i,t-1)$ or $F_{i-j}(x_j-y_i-1,t-1)$ instead of $%
F_{i-j}(x_j-y_i,t)$ with coefficients $q$ and $p$ respectively. In other words, we obtain another recurrence
relation
\begin{equation}  \label{Free}
G_t(\bm{x}|\bm{y})=\sum_{\bm{x'}\in\mathbb{Z}^N_\geq}G_1^{\mathrm{free}}(%
\bm{x}|\bm{x^{'}})G_{t-1}(\bm{x^{'}}|\bm{y}),
\end{equation}
similar to (\ref{master}), but with transition probabilities for the last step written for $N$ non-interacting
particles:
\begin{equation}
G_1^{\mathrm{free}}(\bm{x}|\bm{x^{'}})=\prod_{i=1}^N F_0(x_i-x^{^{\prime }}_i,1)  \label{GreenDetFree}
\end{equation}
While the configuration $\bm{x}$ is supposed to be from $\mathbb{Z}^N_>$, given by (\ref{range}), the range of
summation in (\ref{GreenDetFree}) is the set
\begin{equation}
\mathbb{Z}^N_\geq=\left\{(x_1,\dots,x_N) \in \mathbb{Z}^N : x_{1}\geq x_{2}\geq \dots \geq x_{N}\right\},
\end{equation}
which admits coincidence of coordinates of two successive particles, $%
x^{\prime }_i=x^{\prime }_{i+1}$. Of course, the quantity $G_{t-1}(\bm{x^{'}}%
|\bm{y})$ being defined for $\bm{x'}\in \mathbb{Z}^N_>$ does not have the meaning of a probability in
$\mathbb{Z}^N_\geq$. Nevertheless we can extend formally the determinant (\ref{Green_det_equal time}) to this
set.

Now we show that (\ref{Free}) with (\ref{GreenDetFree}) is equivalent to (%
\ref{master}). Compare first  quantities $G_1^{\mathrm{free}}(\bm{x}|%
\bm{x^{'}})$ and $G_1(\bm{x}|\bm{x^{'}})$. Let us consider a summand of (\ref%
{sumdet}) corresponding to a particular permutation $\sigma$ at $t=1$. The second statement of the Lemma
\ref{cycles} restricts the range of permutations $\sigma$ as well as the range of configurations $\bm{x'}$ for
this summand to be nonzero at given $\bm{x}$. To characterize the structure of permutations, we use the notion of
cycles. Given the permutation $\sigma$ the cycle is a set of elements $(i_1,\dots,i_k)$ such that
\begin{equation}
i_2=\sigma_1,\dots,i_k=\sigma_{i_{k-1}},i_1=\sigma_{i_{k}}.
\end{equation}
Any permutation can be decomposed into a set of disjoint cycles. According to the Lemma \ref{cycles}, the
permutations contributing to the
sum can contain cycles of two types only. First, these are trivial cycles, $%
\sigma_i=i$, which imply either $x_i^{^{\prime }}=x_i$ or $x_i^{^{\prime
}}=x_i-1$. They contribute to the product the terms $F_0(0,1)=q$ and $%
F_0(1,1)=p$ respectively. Second, the cycles which consist of the indices of successive particles packed into
compact clusters:
\begin{eqnarray}
i_1&=&i,\dots,i_k=i+k-1  \label{cluster1} \\
x_i&=&x_i^{^{\prime }},\dots,x_{i+k-1}=x_{i+k-1}^{^{\prime }}=x_i-k+1 \label{cluster2}
\end{eqnarray}
for $1\leq i\leq N-1$ and $2\leq k\leq N-i+1$, so that positions of clusters are the same in $\bm{x}$ and
$\bm{x'}$. The latter  case  gives  the terms of type
\begin{eqnarray}
C_k&=&(-1)^{k-1} F_{i_2-i_1}(x_{i_1}-x^{^{\prime }}_{i_2},1)  \notag \\
&\dots& F_{i_k-i_{k-1}}(x_{i_{k-1}}-x^{^{\prime }}_{i_{k}},1) F_{i_1-i_k}(x_{i_k}-x^{^{\prime }}_{i_1},1),
\label{ClusterC}
\end{eqnarray}
which is illustrated by the diagram in Fig.\ref{cluster}a.
\begin{figure}[tbp]
\unitlength=1mm \makebox(130,60)[cc]
 {\psfig{file=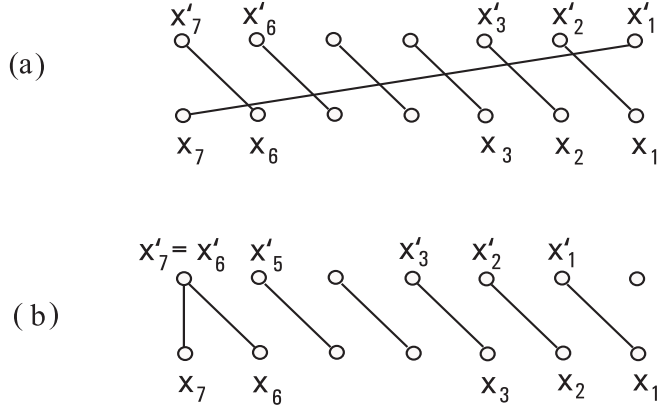,width=130mm}}
\caption{Clusters $C_k$, (a) and $D_k$, (b) for $k=7$} \label{cluster}
\end{figure}
As a result, we obtain
\begin{equation}
C_k=(-1)^{k-1} F_1(1,1)^{k-1} F_{-k+1}(-k+1,1)=p^{k-1}q.  \label{clusterC}
\end{equation}

The transition probability $G_1^{\mathrm{free}}(\bm{x}|\bm{x'})$ in (\ref%
{Free}) also contains the terms $F_0(0,1)=q$ and $F_0(1,1)=p$ corresponding
to free particles staying or making a jump with $x_i^{^{\prime }}=x_i$ or $%
x_i^{^{\prime }}=x_i-1$ respectively. In addition one must take into account cases when two particles from the
same site of $\bm{x'}$ are coming to neighboring sites of $\bm{x}$. In general they can also belong to a bigger
cluster of particles ahead making a step as well. This process, shown in Fig.%
\ref{cluster}b, contributes the term
\begin{equation}
D_k= F_0(x_{i}-x^{^{\prime }}_{i},1)\dots F_0(x_{i+k-2}-x^{^{\prime }}_{i+k-2},1) F_{0}(x_{i+k-1}-x^{^{\prime
}}_{i+k-1},1),  \label{ClusterD}
\end{equation}
where
\begin{eqnarray}
x_{i+1}&=&x_{i}-1, \dots, x_{i+k-1}=x_{i}-k+1,  \notag \\
x_{i}^{^{\prime }}&=&x_{i}-1, \dots, x_{i+k-1}^{^{\prime }}=x_{i}-k+1, \label{x'free}
\end{eqnarray}
so that
\begin{equation}
D_k= F_0(1,1)^{k-1} F_{0}(0,1)=p^{k-1}q  \label{clusterD}
\end{equation}
Thus, there is a one-to-one correspondence between $G_1^{\mathrm{free}}(%
\bm{x}|\bm{x^{'}})$ with $\bm{x^{'}}\in \mathbb{Z}_{\geq}^N$ and the nonzero
summands entering the determinant expansion (\ref{sumdet}) for $G_1(\bm{x}|%
\bm{x^{'}})$ with $\bm{x^{'}}\in \mathbb{Z}_{>}^N$.

Consider now the terms $G_{t-1}(\bm{x'}|\bm{y})$ in the r.h.s. of (\ref%
{master}) and (\ref{Free}) for the configurations $\bm{x'}$, which differ by groups of arguments defined by
(\ref{cluster2}) and (\ref{x'free}) respectively, i.e.
\begin{equation}
G_{t-1}(\dots,x_i,x_i-1,\dots,x_i-k+1,\dots|\bm{y})  \label{shiftC}
\end{equation}
and
\begin{equation}
G_{t-1}(\dots,x_{i}-1,x_{i}-2, \dots, x_{i}-k+1, x_{i}-k+1,\dots|\bm{y}). \label{shiftD}
\end{equation}
The first $k-1$ arguments in (\ref{shiftC}) are shifted by one with respect to those in (\ref{shiftD}). Using the
identity (\ref{FIden2}) and performing
column operations in (\ref{Green_det_equal time}) for columns of the matrix $%
F_{j-i}(x_i^{^{\prime }}-y_j|t-1)$ containing shifted arguments, we equalize the determinants for (\ref{shiftC})
and (\ref{shiftD}) and therefore prove that (\ref{master}) and (\ref{Free}) have the same form.

As the final step, we show that the determinant formula for $G_1(\bm{x}|\bm{x'})$ complies with the definition
(\ref{weights}) of $P_1(\bm{x}|\bm{x'})$. According to (\ref{weights}) a particle which jumps a step forward
brings a factor of $p$ into $P_1(\bm{x}|\bm{x'})$. A particle which does not jump occupies the same site in
$\bm{x}$ and in $\bm{x'}$ and can bring either the factor of $q$ or the factor of $1$. The latter occurs when
there is a particle in the right neighboring site that keeps its place as well. In other words, an isolated
compact cluster of particles keeping their positions when the system jumps from $\bm{x'}$ to $\bm{x}$ always
brings only a factor of $q$, no matter how many particles are in the cluster (by isolated cluster we mean that it
does not belong to any bigger cluster).

We need to check if the r.h.s. of (\ref{sumdet}) taken with $t=1$ gives the same result. One can see that
permutations contributing nontrivially to the sum in (\ref{sumdet}) permute only the indices of subsequent
particles constituting compact clusters, which keep their positions unchanged. Therefore the sum (\ref{sumdet})
can be factorized into factors each coming from an isolated cluster and the factors of $p$ coming from every
single
jumping particle. The latter is already in correspondence with (\ref{weights}%
). What we need is to sum up all the contributions from an isolated cluster. The result is $q$ which can be
proved by induction. Indeed, for a cluster consisting from a single isolated particle there is only a term
$F_0(0,1)=q$ corresponding to the trivial cycle. Let us suppose that it is valid for all clusters consisting of
$i<k$ particles. For $k$ particles the last particle can enter the cycle of the type
(\ref{cluster1},\ref{cluster2}), shown in the Fig.\ref{cluster}a, together with $(i-1)$, $1 \leq i < k$,
particles next to it. The contribution of such cycles is $C_i$ while the sum over all contributions of the other
$(k-i)$ particles brings the factor of $q$ by the induction hypothesis valid for $i<k$. An exception is $i=k$,
when no particles in the cluster remain anymore, resulting simply in $C_k$. Summing up all these contributions we
have
\begin{equation}
q \sum_{i=1}^{k-1}C_i +C_{k}=q,
\end{equation}
which completes the induction.
\end{proof}

\section{Generalized Green Function}

A different view at the TASEP evolution is a representation of the particle trajectories as an ensemble of
non-intersecting lattice paths. This is convenient because then we can consider a wider class of boundary
conditions, i.e., initial and final configurations which refer to different times for different paths.

\subsection{Ensembles of Lattice Paths}

Consider the two-dimensional integer lattice $L=\mathbb{Z}^2$. A point of this lattice is associated with two
integers $(x,t)$,  $x$ corresponding to a particle location and $t$ representing discrete time. Let us fix two
points $(x^0,t^0)\in L$ and $(x,t)\in L$, such that $t>t_0$.

\begin{definition}
A directed lattice path of the length $(t-t_0)$ with the starting point $%
(x^0,t^0)$ and the endpoint $(x,t)$ is a subset of $L$ of the form
\begin{equation}
\Pi_{(x,t)}^{(x^0,t^0)}=\bigcup_{i=0}^{t-t^0}\{(x^i,t^i)\}
\end{equation}
where
\begin{equation}
x^{t-t^0}\equiv x,
\end{equation}
\begin{equation}
t^i= t^0+i,
\end{equation}
and $x^i\in \mathbb{Z}$ can take any values for $i=1,\dots,t-t^0-1$.
\end{definition}

A path is associated with a weight $W\left(\Pi_{(x,t)}^{(x^0,t^0)}\right)$ defined as a product of weights of
elementary (two-point) weights,
\begin{equation}
W\left[\Pi_{(x,t)}^{(x^0,t^0)}\right]=\prod_{i=0}^{t-t^0-1} W\left[%
\left\{(x^i,t^i),(x^{i+1},t^{i+1})\right\}\right].  \label{weight}
\end{equation}
In the definition of lattice paths, the adjective "lattice" refers to the connectivity structure defined by a
given set of nonzero elementary weights, which can be represented as bonds and diagonal links on a 2D lattice
with vertices in $L$. The paths which have nonzero weight, which are the only ones of interest, can be drawn on
this lattice. In our case, the lattice paths are intended to model the stochastic trajectories of particles.
Correspondingly, the set of nonzero elementary weights is defined by the allowed particle jumps. For example, for
the 1D directed Bernoulli random walk, which is a particular case of the above dynamics with a single particle,
the nonzero elementary weights are
\begin{equation}
W\left(\left\{(x,t),(x,t+1)\right\}\right)=q,\,\,\, W\left(\left\{(x,t),(x+1,t+1)\right\}\right)=p,
\end{equation}
where $(x,t)\in \mathbb{Z}^2$.

Within these definitions, the GF of a single particle for the directed Bernoulli random walk, which is the $N=1$
case of the GF defined in the previous section, is given by the sum of the weights of all possible paths with
given starting and end points:
\begin{eqnarray}
G_t(x|x^0)=\sum_{\left\{\Pi_{(x,t)}^{(x^0,t^0)}\right\}}W\left[%
\Pi_{(x,t)}^{(x^0,t^0)}\right].
\end{eqnarray}

To interpret the results of previous section for the $N$-particle TASEP GF, we have to consider $N$
\textit{interacting} paths on the lattice. Note that the results for the GF connecting two particle
configurations at different times strictly depend on the fact that the particles are non-colliding and stay
ordered during the whole evolution. This is also true for their space-time trajectories, i.e. for the
corresponding lattice paths. Indexed at the initial time moment, the points associated with different paths have
the same order at any later time. Our aim is to define a generalization of the GF in such a way that the initial
and final positions of different particles would refer to different time moments. For this definition to be of
practical use, we would like to keep the path ordering condition. It turns out, that this is still possible if we
restrict the range of initial and final configurations:

\begin{definition}
A set of $N$ indexed points of $L$
\begin{equation*}
(\bm{x},\bm{t})=((x_1,t_1),\dots,(x_N,t_N))
\end{equation*}
is called admissible if the following conditions are satisfied simultaneously
\begin{equation}\label{x order}
x_1>x_2>\dots>x_N
\end{equation}
and
\begin{equation}\label{t order}
t_1\leq t_2 \leq \dots \leq t_N.
\end{equation}
\end{definition}

\begin{definition}
\label{N-path} Given two admissible $N$-point configurations
\begin{eqnarray}
(\bm{x^0},\bm{t^0})=\left((x_1^0,t_1^0),\dots,(x_N^0,t_N^0)\right) \\
(\bm{x},\bm{t})=\left((x_1,t_1),\dots,(x_N,t_N)\right)
\end{eqnarray}
such that
\begin{equation}
t_i> t_i^0 \,\,\,\mathrm{for} \,\,\, i=1,\dots,N,
\end{equation}
an N-path is a subset of $L\times\{1,\dots,N\}$ of the form
\begin{equation}
\bm{\Pi}_{(\bm{x,t})}^{(\bm{x^0,t^0})}=\left\{{(\Pi_1)}%
^{(x^0_1,t^0_1)}_{(x_1,t_1)}, \dots,{(\Pi_N)}_{(x_N,t_N)}^{(x^0_N,t^0_N)}%
\right\}.
\end{equation}
The number
\begin{equation}
T=t_N-t_1^0
\end{equation}
is referred to as the length of the N-path.
\end{definition}

The interaction in the TASEP implies that the statistical weight $W\left[%
\bm{\Pi}\right]$ of an $N$-path is not in general the product of the weights of $N$ single paths. Namely, it has
a form of the product of terms depending on two adjacent paths,
\begin{equation}
W\left[\bm{\Pi}\right]=W\left[\Pi_1\right]\prod_{i=1}^{N-1} W\left[{\Pi_{i+1}%
}|{\Pi_i}\right].
\end{equation}
The weight $W\left[{\Pi_{i+1}}|{\Pi_i}\right]$ is a product of two point terms for the second path, $\Pi_{i+1}$
\begin{equation}
W\left[{\Pi_{i+1}}|{\Pi_i}\right]=\prod_{j=0}^{t_{i+1}-t^0_{i+1}-1} W\left[%
\left\{(x^j_{i+1},t^j_{i+1}),(x^{j+1}_{i+1},t^{j+1}_{i+1})\right\}|\Pi_i%
\right],
\end{equation}
conditioned by the configuration of the first path $\Pi_{i}$. All the nonzero two-point terms are defined as
follows
\begin{eqnarray}
W\left[\left\{(x,t),(x+1,t+1)\right\}|\Pi\right]=\left\{
\begin{matrix}
p, & (x+1,t+1) \notin \Pi \label{weight_p} \\
0, & (x+1,t+1) \in \Pi%
\end{matrix}%
, \right. \\
W\left[\left\{(x,t),(x,t+1)\right\}|\Pi\right]=\left\{
\begin{matrix}
q, & (x+1,t+1) \notin \Pi \\
1, & (x+1,t+1) \in \Pi%
\end{matrix}%
. \right.  \label{weight_q}
\end{eqnarray}

Now we are in a position to introduce the generalized Green function (GGF).

\begin{definition}
Given two $N$-point configurations $(\bm{x,t})$ and $(\bm{x_0,t_0})$
satisfying the conditions of definition \ref{N-path}, a GGF from $(%
\bm{x_0,t_0})$ to $(\bm{x,t})$ is
\begin{equation}
G\left((\bm{x,t})|(\bm{x_0,t_0})\right)= \sum_{\left\{\bm{\Pi}_{(\bm{{x,t})}}^{%
{(\bm{x^0,t^0})}}\right\}} W\left[\bm{\Pi}_{(\bm{{x,t})}}^{{(\bm{x^0,t^0})}}%
\right]
\end{equation}
where summation is over all possible $N$-paths between $(\bm{x^0,t^0})$ and $%
(\bm{{x,t}})$.
\end{definition}

Obviously the GF $G_t\left(\bm{x}|\bm{x_0}\right)$ defined in the Section %
\ref{Green function} is a particular case of the GGF with $%
t_1^0=\dots=t_N^0\equiv t^0$ and $t_1=\dots=t_N=t+t^0$. Of course, it does not depend on $t^0$ because of the
time shift invariance. The main result of this section is the proof that the determinantal form
(\ref{Green_det_equal time}) of $G_t\left(\bm{x}|\bm{x_0}\right)$ can be extended to the GGF.

\begin{proposition}
Given two admissible \footnote{ As shown in \cite{Brankov}, the range of initial and final configurations, which
ensure the validity of the proposition, is even broader. Why we limit our consideration by the admissible
configurations only  will be clear from the subsection 4.2. }
 $N$-point configurations $(\bm{x,t})$ and $(\bm{x_0,t_0})$,
such that $t_i>t_i^0$ for $i=1,\dots,N$, the GGF can be expressed in the determinantal form:
\begin{equation}
G\left((\bm{x,t})|(\bm{x_0,t_0})\right)=\det[F_{j-i}(x_i-x_j^0,t_i-t_j^0)]%
_{i,j=1,\dots,N},  \label{GGF_det}
\end{equation}
where $F_n(x,t)$ is defined in (\ref{FDisc}).
\end{proposition}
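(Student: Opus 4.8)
The plan is to bridge the staggered-time formula (\ref{GGF_det}) to the equal-time determinant of Proposition \ref{Green_fun_equal_time}, which already establishes the $N\times N$ determinantal form in the case $t_1^0=\dots=t_N^0$, $t_1=\dots=t_N$. I would argue by induction on the total length $T=t_N-t_1^0$, taking as base case the equal-time (single-step) configurations governed directly by the one-step weights (\ref{weights}) and Proposition \ref{Green_fun_equal_time} (using time-shift invariance), and I would mirror the three-step scheme of that proof: reduce one time step by a Chapman--Kolmogorov step on the path side, then reproduce the same reduction on the determinant side with the identities (\ref{FIden1}) and (\ref{FIden2}). A convenient auxiliary tool throughout is the semigroup identity $F_n(x,s_1+s_2)=\sum_{y}F_0(x-y,s_1)F_n(y,s_2)$, which follows immediately from the contour representation (\ref{FDisc}) and lets one attach or detach a single free propagation step to one row.

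On the path side I would peel off the last global time step, from $t_N-1$ to $t_N$. Because both configurations are admissible, the only paths carrying a lattice point at time $t_N$ are those with $t_i=t_N$, and by the time ordering (\ref{t order}) these form a block of consecutive top-indexed particles $\{m+1,\dots,N\}$. The crucial observation, which I expect to need to justify carefully from the conditional two-point weights (\ref{weight_p})--(\ref{weight_q}), is that a particle which has already reached its endpoint (or has not yet started) contributes no lattice points at the relevant times, so the blocking alternative ``$(x+1,t+1)\in\Pi$'' is vacuous and it cannot obstruct an active particle. Hence on this step the active block evolves by a single ordinary TASEP step with a free leader (particle $m+1$ sees nothing ahead of it), and the $N$-path sum factorizes as a convolution of exactly the form (\ref{master}), but restricted to the active rows.

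On the determinant side I would then show that (\ref{GGF_det}) transforms consistently under this operation. Applying the time recurrence (\ref{FIden1}) to the active rows (those with $t_i=t_N$) expands $\det[F_{j-i}(x_i-x_j^0,t_i-t_j^0)]$ into the one-step convolution of a block transition determinant with the determinant at reduced top time $t_N-1$, while the spectator rows and columns pass through untouched; when a particle's interval has shrunk to a single step, the same expansion instead removes it entirely and reduces $N$ by one via a cofactor along the last row. The position recurrence (\ref{FIden2}) is then used to carry out the column operations that reconcile the unrestricted lattice summation occurring in the determinant with the physically ordered summation — the $\mathbb{Z}^N_\geq$ versus $\mathbb{Z}^N_>$ distinction already met in Proposition \ref{Green_fun_equal_time} — so that the cluster-cancellation mechanism of that proof applies here verbatim to the active block.

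I expect the main obstacle to be precisely the bookkeeping at the moving boundary where the active block changes: one must show that the departed and not-yet-started particles organize themselves so that the determinant's full-$\mathbb{Z}$ summation cancels against the spurious non-ordered configurations, leaving exactly the ordered convolution produced by the $N$-path weights, and that the spectator rows never spoil these cancellations. The admissibility conditions (\ref{x order})--(\ref{t order}) are exactly what make the active particles a consecutive block and keep the space ordering strictly decreasing, which is what allows the cluster argument of Proposition \ref{Green_fun_equal_time} to be localized to the block. Once the inductive step is established in this form, each application strictly decreases $T$, and iterating it collapses the configuration down to the single-step base case, completing the proof.
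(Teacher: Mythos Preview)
Your proposal is correct and follows essentially the same route as the paper: induction on the length $T=t_N-t_1^0$, peeling off the final time slice via a convolution restricted to the active block of particles, and reducing to the one-step mechanism of Proposition~\ref{Green_fun_equal_time} through identities (\ref{FIden1})--(\ref{FIden2}). The only packaging difference is that where you remove the unit-interval (``just-started'') particles by cofactor expansion along the last row, the paper keeps the full $N\times N$ determinant and instead observes that the matrix $[F_{j-i}(x'_i-x_j^0,t'_i-t_j^0)]$ at the reduced time is block upper-triangular---the off-diagonal block with $1\leq i\leq k_2$, $k_2< j\leq N$ vanishes because $t'_i-t_j^0\leq 0$ there and $F_n(x,t)=0$ in that range---so that the determinant factorizes as the GGF for the first $k_2$ particles times the product of Kronecker deltas pinning the initial positions of the rest.
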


\begin{proof}
Consider an $N$-path $\bm{\Pi}_{(\bm{{x,t})}}^{{(\bm{x^0,t^0})}}$. It follows from the definition of weights,
that if we cut an $N$-path along points of equal time, the parts obtained will be independent of each other. To
be specific, let us fix $t^{\prime }$, $t_1^0\leq t^{\prime }\leq t_N$. In general, $N$ paths forming the $N$-
path can be subdivided into three groups: ending up before or at $t^{\prime }$; starting before $t^{\prime }$
and ending later than $t^{\prime }$; starting at or later than $t^{\prime }$%
. In other words, there are two integers, $1\leq k_1\leq k_2\leq N$, such that
\begin{eqnarray}
t_1,\dots,t_{k_1}&\leq & t^{\prime } \\
t_{k_1+1},\dots,t_{k_2}&>&t^{\prime }>t_{k_1+1}^0,\dots,t_{k_2}^0 \\
t_{k_2+1}^0,\dots,t_{k_N}^0&\geq&t^{\prime }.
\end{eqnarray}

Let us fix two pairs of configurations of $k_2$ points
\begin{eqnarray}
(\bm{x}^0,\bm{t}^0)_{-}&=&\left((x_1^0,t_1^0),\dots,
(x_{k_2}^0,t_{k_2}^0)\right)  \label{(x^0,t^0)_-} \\
(\bm{x}^{\prime },\bm{t}^{\prime })_{-}&=&\left((x_1,t_1),\dots,(x_{k_1},t_{k_1}),(x^{\prime }_{k_1+1},t^{\prime
}),\dots, (x^{\prime }_{k_2},t^{\prime })\right)
\end{eqnarray}
and $(N-k_2)$ points
\begin{eqnarray}
(\bm{x}^{\prime },\bm{t}^{\prime })_{+}&=&\left((x^{\prime }_{k_1+1},t^{\prime }),\dots,(x^{\prime
}_{k_2},t^{\prime
}),(x_{k_2+1}^0,t_{k_2+1}^0), \dots,(x_N^0,t_N^0)\right) \\
(\bm{x},\bm{t})_{+}&=&\left((x_{k_1+1},t_{k_1+1}),\dots,(x_N,t_N)\right) \label{(x,t)_+}
\end{eqnarray}
where $(x^{\prime }_{k_1+1},t^{\prime }),\dots,(x^{\prime }_{k_2},t^{\prime
})$ are the intermediate points of the paths ${\Pi_{k_1+1}}$,\ $\dots$,${%
\Pi_{k_2}}$ respectively. For the weights being nonzero they have to be bounded to
\begin{equation}
x_{k_1+1}\geq x^{\prime }_{k_1+1}>\dots>x^{\prime }_{k_2} \geq x_{k_2}^0 , \label{range of x'}
\end{equation}
which ensures that the configurations (\ref{(x^0,t^0)_-})-(\ref{(x,t)_+}) are admissible. Therefore, we can
represent our $N$-path as a union
\begin{equation}
\bm{\Pi}_{(\bm{{x,t})}}^{{(\bm{x^0,t^0})}}=\bm{\widetilde{\Pi}}_{(%
\bm{{x',t'}})_-}^{{(\bm{x^0,t^0})_-}} \bigcup\bm{\widetilde{\Pi}}_{(%
\bm{{x,t}})_+}^{{(\bm{x',t'}})_+}
\end{equation}
of $k_2$-path and $(N-k_1)$-path respectively, which are clearly independent, so that  the weight of the $N$-path
is equal to the product of two weights:
\begin{equation}
W\left[\bm{\Pi}_{(\bm{{x,t})}}^{{(\bm{x^0,t^0})}}\right]= W\left[%
\bm{\widetilde{\Pi}}_{(\bm{{x',t'}})_-}^{{(\bm{x^0,t^0})_-}} \right]W\left[%
\bm{\widetilde{\Pi}}_{(\bm{{x,t}})_+}^{{(\bm{x',t'}})_+}\right].
\end{equation}
Then, we can split the summation over  $N$-paths into independent
summations over all possible realizations of $\bm{\widetilde{\Pi}}_{(%
\bm{{x',t'}})_-}^{{(\bm{x^0,t^0})_-}}$ and $\bm{\widetilde{\Pi}}_{(\bm{{x,t}}%
)_+}^{{(\bm{x',t'}})_+}$, each resulting in the corresponding GGF, and a summation over the intermediate points
$x^{\prime }_{k_1+1},\dots,x^{\prime }_{k_2}$ which vary in the range (\ref{range of x'}):
\begin{eqnarray}
G\left({(\bm{{x,t})}}|{{(\bm{x^0,t^0})}}\right)=&&  \label{GGF convolution}
\\
\sum_{\{x^{\prime }_{k_1+1},\dots,x^{\prime }_{k_2}\}}&&G\left({(\bm{{x,t}}%
)_+}|{{(\bm{x',t'}})_+}\right) G\left({(\bm{{x',t'}})_-}|{{(\bm{x^0,t^0})_-}}%
\right)  \notag
\end{eqnarray}
Using this formula and Proposition \ref{Green_fun_equal_time} we can construct the GGF recursively, attaching one
time step at a time.

\textit{First step:} Consider an $N$-path of unit length $T=t_N-t_1^0=1$. In this case all the paths constituting
the $N$-path must fall within the same unit time interval,
\begin{eqnarray}
&&t_1^0=\dots=t_N^0\equiv t^0 \\
&&t_1=\dots=t_N=t^0+1.
\end{eqnarray}
Obviously, a GGF in this case coincides with the ordinary GF discussed in Section \ref{Green function}:
\begin{equation}
G\left({(\bm{{x,t}})}|{{(\bm{x^0,t^0})}}\right)= G_1\left({\bm{x}}|{{\bm{x^0}%
}}\right),
\end{equation}
which fits (\ref{GGF_det}), thus proving the proposition for this special case.

\textit{Recursion:} Suppose (\ref{GGF_det}) is valid for the $N$-paths of
any length up to $(t-1)$. Consider $N$-paths of length $t$ and apply (\ref%
{GGF convolution}) with $t^{\prime }=t-1$,  dividing an $N$-path of length $t$ into parts of lengths $t-1$ and
the unit length. Again, the unit length part is given by the regular GF:
\begin{eqnarray}
&G\left({(\bm{{x,t})_+}}|{{(\bm{x',t'})_+}}\right)=\hspace{7cm}& \\
&G_1\left((x_{k_1+1},\dots,x_{N})|(x^{\prime }_{k_1+1},\dots,x^{\prime }_{k_2},x_{k_2+1}^0, \dots,x_N^0)\right),&
\notag
\end{eqnarray}
Then (\ref{GGF convolution}) can be rewritten as
\begin{eqnarray}
G\left({(\bm{{x,t})}}|{{(\bm{x^0,t^0})}}\right)= \sum_{\{x^{\prime
}_{k_1+1},\dots,x^{\prime }_{k_N}\}\in \mathbb{Z}^{N-k_1}_>}G\left({(%
\bm{{x',t'}})_-}|{{(\bm{x^0,t^0})_-}}\right)  \notag \\
\times G_1\left((x_{k_1+1},\dots,x_{N})|(x^{\prime }_{k_1+1}, \dots,x^{\prime }_{k_N})\right)\delta_{x^{\prime
}_{k_2+1},x_{k_2+1}^0}\dots\delta_{x^{\prime }_N,x_N^0},  \label{GGF rec1}
\end{eqnarray}
where we replaced the starting coordinates $x_{k_2+1}^0,\dots,x_N^0$ by summation variables $x^{\prime
}_{k_2+1},\dots,x^{\prime }_N$ respectively, fixing their values with the corresponding Kronecker delta symbols.

Here we can formally release the summation domain (\ref{range of x'}) to the whole $\mathbb{Z}^{N-k_1}_>$, as one
of the factors of the expression under the sum is zero beyond this domain. The first factor together with
Kronecker deltas can be packed into a single determinant in the domain (\ref{range of x'}),
\begin{eqnarray}
G\left({(\bm{{x',t'}})_-}| {{(\bm{x^0,t^0})_-}}\right)&\times&\delta_{x^{%
\prime }_{k_2+1},x_{k_2+1}^0}\dots \delta_{x^{\prime }_N,x_N^0}  \notag \\
&=& \det\left[F_{j-i}(x^{\prime }_i-x_j^0,t^{\prime }_i-t_j^0)\right]%
_{i,j=1}^N  \label{big det}
\end{eqnarray}
where we set
\begin{eqnarray}
&&x^{\prime }_1 \equiv x_1,\dots,x^{\prime }_{k_1}\equiv x_{k_1}, \\
&&t^{\prime }_1 \equiv t_1,\dots,t^{\prime }_{k_1}\equiv t_{k_1},t^{\prime }_{k_1+1}=\dots= t_{N}\equiv t^{\prime
},
\end{eqnarray}
This can be verified by noting that for
\begin{eqnarray}
1\leq i\leq k_2,\,\,\,k_2+1\leq j\leq N,
\end{eqnarray}
the following matrix elements are zero
\begin{eqnarray}
F_{j-i}(x^{\prime }_i-x_j^0,t^{\prime }_i-t_j^0)=0.  \label{F(x'-x^0)}
\end{eqnarray}
Indeed, we have either
\begin{eqnarray}
t^{\prime }_i-t_j^0<0
\end{eqnarray}
or
\begin{eqnarray}
t^{\prime }_i-t_j^0&=&0, \\
x^{\prime }_i-x_j^0&=&x^{\prime }_i-x^{\prime }_j>0,
\end{eqnarray}
as the configuration $(\bm{x',t'}%
)_+$ is admissible. Then (\ref{F(x'-x^0)}) follows from the properties of $%
F_n(x,t)$, (\ref{F_n=0}-\ref{n>0}). Thus the matrix on the r.h.s. of (\ref%
{big det}) has a block form
\begin{equation}
\left[F_{j-i}(x^{\prime }_i-x_j^0,t^{\prime }_i-t_j^0)\right]_{i,j=1}^N=
\begin{pmatrix}
\bm{A} & \bm{C} \\
\bm{0} & \bm{B}%
\end{pmatrix}%
,
\end{equation}
where
\begin{eqnarray}
\bm{A}=\left[F_{j-i}(x^{\prime }_i-x_j^0,t^{\prime }_i-t_j^0)\right]%
_{i,j=1,\dots,{k_2}} \\
\bm{B}=\left[F_{j-i}(x^{\prime }_i-x_j^0,t^{\prime }_i-t_j^0)\right]_{i,j={%
k_2}+1,\dots, N}
\end{eqnarray}
By definition
\begin{eqnarray}
\det \bm{A}=G\left({(\bm{{x',t'}})_-}|{{(\bm{x^0,t^0})_-}}\right)
\end{eqnarray}
and from Proposition \ref{Green_fun_equal_time}
\begin{eqnarray}
\det \bm{B}=G_0((x^{\prime }_{k_2+1},\dots,x^{\prime }_N)|(x_{k_2+1}^0,\notag \dots,x_N^0))\\=\delta_{x^{\prime
}_{k_2+1},x_{k_2+1}^0}\dots\delta_{x^{\prime }_N,x_N^0}.
\end{eqnarray}
Then (\ref{big det}) follows from
\begin{equation}
\det
\begin{pmatrix}
\bm{A} & \bm{C} \\
\bm{0} & \bm{B}%
\end{pmatrix}%
=\det \bm{A} \det \bm{B}.
\end{equation}

It remains to prove that a substitution of the determinant for $G_1\left(%
\bm{x}| \bm{x'}\right)$ and the r.h.s of (\ref{big det}) into (\ref{GGF rec1}%
) results in the determinant (\ref{GGF_det}). This is completely analogous
to the proof of the recurrence relation in Proposition \ref%
{Green_fun_equal_time}, up to the fact that only the matrix elements from the columns with the indices
$j=k_1+1,\dots,N$ are involved in the summation.
\end{proof}

\subsection{Probabilistic content}

The GF introduced in Section \ref{Green function} defines a probability measure on the set of particle
configurations (\ref{range}) and thus has the
natural interpretation of a conditional probability to be in configuration $%
\bm{x}$ at time step $t$ given an initial configuration $\bm{x}^0$. However, the GF is also a marginal measure
defined on the set of all TASEP realizations with exactly $t$ steps. When we sum over all possible realizations
provided the final configuration at step $t$ is fixed, we get the Green function $G_t(\bm{x}|\bm{x}^0)$.
Moreover, due to the Markov character of the process, it obviously has the same meaning also on the set of
process realizations of any longer duration of $t^{\prime }\geq t$ steps. Then, the GF still gives a probability
of a particle configuration to appear at the intermediate step $t$. Integration of the GF over a subset of
configurations, where the positions of some particles are constrained to a given domain, yields a corresponding
new marginal probability, also called correlation function. An objective of the present article is an extension
of this scheme to marginal probabilities referring to particle positions at different times. To this end, we aim
to treat the GGF as a marginal measure defined over more general sets than the set of fixed time particle
configurations. Before discussing the general scheme, we first outline it with the example of a single particle.

\subsubsection{Single particle ($N$=1)}

Consider a set of lattice points
\begin{equation}
\mathcal{B}=\{(x_i,t_i)\}_{i\in \mathbb{Z}},  \label{boundary}
\end{equation}
where for any $i\in \mathbb{Z}$ either
\begin{equation}
(x_{i+1},t_{i+1})=(x_i+1,t_i)  \label{boundarypoints}
\end{equation}
or
\begin{equation}
(x_{i+1},t_{i+1})=(x_i,t_i-1).  \label{boundarypoints_v}
\end{equation}
We refer to $\mathcal{B}$ as a boundary set implying that it can be viewed as a boundary of the set
\begin{equation}
\tilde{L}=\bigcup_{(x,t)\in \mathcal{B}} \{(x^{\prime },t^{\prime })\in L :x^{\prime }\leq x, t^{\prime }\leq t,
\}.
\end{equation}
Note that
\begin{equation}
\mathcal{B} \subset \tilde{L}.
\end{equation}

The boundary set has the following obvious property

\begin{description}
\item[Property I.]  Any directed Benoulli random walk starting inside $%
\tilde{L}$ and ending in its complement, $\tilde{L}^c \equiv L \backslash \tilde{L}$, crosses the boundary.
Furthermore, once having left $\tilde{L}$
it can never return. More specifically, consider a path $%
\Pi_{(x,t)}^{(x^0,t^0)}$ such that $(x^0,t^0)\in \tilde{L}$ and $(x,t)\in
\tilde{L}^c$. Then, the path necessarily contains exactly one point $%
(x^{\prime },t^{\prime })\in \Pi_{(x,t)}^{(x^0,t^0)}$, such that
\begin{equation}
(x^{\prime },t^{\prime }) \in \mathcal{B}
\end{equation}
and
\begin{equation}
(x^{\prime \prime },t^{\prime \prime }) \in \tilde{L}^c
\end{equation}
for any $(x^{\prime \prime },t^{\prime \prime })\in \Pi_{(x,t)}^{(x^0,t^0)}$ with $t^{\prime \prime }>t^{\prime
}$.
\end{description}

For each $(x,t)\in \mathcal{B}$ we define an exit probability
\begin{eqnarray}
P_{e}^\mathcal{B}((x,t))=\sum_{(x^{\prime },t+1)\in \tilde{L}^c}\notag
W[\{(x,t),(x^{\prime },t+1)\}]\\= \sum_{(x^{\prime },t^{\prime })\in \tilde{%
L}^c}W[\{(x,t),(x^{\prime },t^{\prime })\}].
\end{eqnarray}

A generalization of the probabilistic meaning of the GF follows from the following simple claim.

\begin{proposition}
Given $(x^0,t^0)\in \tilde{L}$, the quantity
\begin{equation}
\mathcal{G}^\mathcal{B}((x',t')|(x^0,t^0))=P_{e}^\mathcal{B}((x',t')) G((x',t')|(x^0,t^0)) \label{claim}
\end{equation}
defines a probability measure on $\mathcal{B}$.
\end{proposition}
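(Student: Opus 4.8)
The plan is to verify the two defining properties of a probability measure on $\mathcal{B}$: that every $\mathcal{G}^\mathcal{B}((x',t')|(x^0,t^0))$ is non-negative, and that the sum over all boundary points equals one. Non-negativity is immediate, since both factors in (\ref{claim}) are sums of products of the non-negative elementary weights $p$, $q$, $1$; thus the exit probability $P_e^\mathcal{B}$ and the single-particle Green function $G$ are each non-negative. Hence the whole content of the claim lies in the normalization $\sum_{(x',t')\in\mathcal{B}}\mathcal{G}^\mathcal{B}((x',t')|(x^0,t^0))=1$.

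First I would give the product (\ref{claim}) its trajectory interpretation. The factor $G((x',t')|(x^0,t^0))$ is the total weight of directed Bernoulli paths from $(x^0,t^0)$ to $(x',t')$. Since $(x',t')\in\mathcal{B}\subset\tilde{L}$ and, by Property I, a walk that has once left $\tilde{L}$ can never return, every such path must in fact lie entirely in $\tilde{L}$; so $G((x',t')|(x^0,t^0))$ equals the weight of all paths from $(x^0,t^0)$ to $(x',t')$ that stay inside $\tilde{L}$. Multiplying by $P_e^\mathcal{B}((x',t'))$, the weight of one further step from $(x',t')$ into $\tilde{L}^c$, the quantity (\ref{claim}) becomes exactly the total weight of trajectories whose last point in $\tilde{L}$ is $(x',t')$, i.e. those that \emph{first} exit $\tilde{L}$ through the boundary point $(x',t')$.

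Next I would sum over $\mathcal{B}$. By Property I, any trajectory started at $(x^0,t^0)\in\tilde{L}$ that ever reaches $\tilde{L}^c$ crosses $\mathcal{B}$ at exactly one point, its unique first-exit point. Consequently the events ``first exit at $(x',t')$'' are pairwise disjoint as $(x',t')$ ranges over $\mathcal{B}$, and their union is the event that the walk ever leaves $\tilde{L}$. Therefore $\sum_{(x',t')\in\mathcal{B}}\mathcal{G}^\mathcal{B}((x',t')|(x^0,t^0))$ equals precisely the probability that the walk started at $(x^0,t^0)$ eventually exits $\tilde{L}$.

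The remaining, and I expect hardest, step is to show that this exit probability is one, i.e. that the walk leaves $\tilde{L}$ almost surely. Here I would invoke the staircase geometry: each step increases $t$ by exactly one, while the boundary height $X(t)=\max\{x:(x,t)\in\tilde{L}\}$ is non-increasing in $t$ by (\ref{boundarypoints})--(\ref{boundarypoints_v}); meanwhile the single-particle position obeys $x(t)=x^0+\mathrm{Bin}(t-t^0,p)$, which tends to $+\infty$ almost surely because $p>0$. Thus with probability one $x(t)$ eventually exceeds $X(t)$, forcing a crossing, so the total exit probability is one and the normalization follows. The only care needed is to treat degenerate boundaries (for instance those along which $X(t)$ stays finite, or where $\tilde{L}$ is empty above some time level) uniformly; in every case the strictly positive drift $p>0$ together with the deterministic time increment guarantees a.s. exit, which closes the argument and proves the claim.
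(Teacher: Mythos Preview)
Your proof is correct and follows essentially the same route as the paper: both interpret $\mathcal{G}^{\mathcal{B}}((x',t')|(x^0,t^0))$ as the probability that the Bernoulli path first exits $\tilde{L}$ at $(x',t')$, invoke Property~I to see that these events partition the event of eventual exit, and then argue that exit occurs with probability one. The only difference is in this last step: the paper first treats the case where $\tilde{L}$ is bounded in time (so exit is certain by a fixed time) and then removes the bound via the explicit tail estimate $O(p^{x-x^0}q^{t-t^0})$, whereas you argue directly that the non-increasing boundary height $X(t)$ is eventually dominated by $x(t)=x^0+\mathrm{Bin}(t-t^0,p)\to+\infty$ a.s.; both arguments are valid and yours is somewhat more streamlined.
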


\begin{proof}
Suppose first that $\tilde{L}$ is bounded from above in time direction, i.e. there exists a time $t$ such that
\begin{equation}
\tilde{L}\subset L_t \equiv \{(x',t')\in L: t'<t , x' \in \mathbb{Z}\}.  \label{L bound}
\end{equation}
Then any path with starting point $(x^0,t^0)\in \tilde{L}$ and end point at $%
(x,t)$, with any $x\in \mathbb{Z}$, definitely leaves $\tilde{L}$ via a unique point of $\mathcal{B}$ due to
property I.

Consider a set of paths of length $T=t-t_0$
\begin{equation}
\Omega_t^{(x^0,t^0)}=\{\Pi^{(x^0,t^0)}_{(x,t)}:{x\in\mathbb{Z}}\},
\end{equation}
with all possible endpoints. Note that the weights we consider maintain the probability conservation
\begin{equation}
\sum_{\Pi\in \Omega_t^{(x^0,t^0)}}W[\Pi]=\sum_{x\in \mathbb{Z}}G((x,t)|{(x_0,t_0)})=1. \label{G norm}
\end{equation}
Therefore weights $W[\Pi^{(x_0,t_0)}_{(x,t)}]$ define a probability measure on $\Omega_t^{(x^0,t^0)}$ assigning
probability to each particular path. We can now decompose any path from $\Omega_t^{(x^0,t^0)}$ into three parts
\begin{equation}
\Pi^{(x_0,t_0)}_{(x,t)}=\Pi^{(x_0,t_0)}_{(x',t')}\bigcup\{(x',t'), (x'',t'+1)\}\bigcup\Pi^{(x'',t'+1)}_{(x,t)},
\label{decompose}
\end{equation}
where $(x^{\prime },t^{\prime })\in \mathcal{B}$ and $(x^{\prime \prime },t^{\prime }+1)\in \tilde{L}^c$. The
Markov property implies that the weight of a path is a product of the weights of its parts :
\begin{equation}
W\left(\Pi^{(x_0,t_0)}_{(x',t')}\bigcup\Pi^{(x',t')}_{(x,t)}\right)
=W\left(\Pi^{(x_0,t_0)}_{(x',t')}\right)W\left(\Pi^{(x',t')}_{(x,t)}\right).\label{Markov1}
\end{equation}

Using this property we can sum the weight of (\ref{decompose}) over all
paths from $\Omega_t^{(x^0,t^0)}$ with $(x^{\prime },t^{\prime })\in\mathcal{%
B}$ fixed and $(x^{\prime \prime },t^{\prime }+1)$ varying in $\tilde{L}^c$:
\begin{equation}  \label{curlygreen}
   \mathcal{G}^\mathcal{B}((x',t')|(x^0,t^0))=\sum_{\tiny \left\{\Pi\in \Omega_{t}^{(x^0,t^0)}:
{(x',t')\in \mathcal{B};(x'',t'+1)\in{\tilde{L}^c}}\right\}}W((\mathrm {eq. \ref{decompose}}))
\end{equation}
The summation over the last segment yields 1 due to (\ref{G norm}), while the rest results in the r.h.s. of
(\ref{claim}).

Therefore, $\mathcal{G}^\mathcal{B}((x',t')|(x^0,t^0))$ gives a
marginal probability for a path form $\Omega_t^{(x_0,t_0)}$ to exit from $%
\tilde{L}$ at a particular point $(x^{\prime },t^{\prime })$ of the boundary. As any path from
$\Omega_t^{(x_0,t_0)}$ definitely crosses the boundary, this in particular implies
\begin{equation}
\sum_{(x',t')\in \mathcal{B}}\mathcal{G}^\mathcal{B}((x',t')|
(x^0,t^0))=\sum_{\{\Pi\in\Omega_t^{(x_0,t_0)}\}}W[\Pi]=1.
\end{equation}
Now one can release the condition $(\ref{L bound})$ which ensures that every path from $\Omega_t^{(x_0,t_0)}$
crosses the boundary before time $t$. This may not happen if the boundary approaches infinite time at some finite
value of the coordinate $x>x^0$, which is to say that
\begin{equation}
\mathcal{B}\supset \bigcup_{i\in \mathbb{N}}(x,t^*+i)
\end{equation}
for some $t^*$. Then, for any $t$ there exist paths in $%
\Omega^{(x_0,t_0)}_{t}$ which do not reach the boundary within $(t-t_0)$
time steps. A simple estimate, however, shows that their total measure in $%
\Omega^{(x_0,t_0)}_{t}$ vanishes as $t\to\infty$.
\begin{equation}
\mathrm{Prob}\left(\Pi\subset\Omega^{(x_0,t_0)}_{t}: \Pi\bigcap\mathcal{B}%
=\emptyset\right)= O\left(p^{x-x^0}q^{t-t^0}\right).
\end{equation}
Therefore we have
\begin{eqnarray}
 \sum_{(x',t')\in\mathcal{B}} \mathcal{G}^\mathcal{B}((x',t')|(x^0,t^0))\equiv \lim_{t\to \infty}
\sum_{(x',t')\in\mathcal{B}\bigcap L_{t}}\mathcal{G}^\mathcal{B}((x',t')|(x^0,t^0))\notag\\
 =\lim_{t\to \infty}\sum_{\left\{\Pi\subset\Omega^{(x_0,t_0)}_{t}:
\Pi\bigcap\mathcal{B}\neq\emptyset\right\}}W[\Pi]
 =1.
\end{eqnarray}
This proves the proposition.
\end{proof}

The restriction of the set of paths to $\Omega^{(x_0,t_0)}_{t}$ allows us to interpret the function
$\mathcal{G}^\mathcal{B}((x,t)|(x^0,t^0))$ in terms of the particle evolution. Specifically,
$\Omega^{(x_0,t_0)}_{t}$ can be treated as the set of all possible sequences of $(t-t^0)$ positions of a
particle that started its walk at site $x^0$. Correspondingly, $\mathcal{G}^%
\mathcal{B}((x,t)|(x^0,t^0))$ gives a marginal probability over this set for the sequence, in which space-time
position of the particle leaves the domain $\tilde{L}$ right after taking the value $(x,t)\in \mathcal{B}$. We
may
conclude that when $(x,t+1)\notin \tilde{L}$, $\mathcal{G}^\mathcal{B}%
((x,t)|(x^0,t^0))$ gives the probability for a particle to be in site $x$ at
time $t$, irrespectivly of what the next step is. When $(x,t+1)\in \tilde{L}$%
, this is a probability for the particle to jump out of site $x$ at time $t$.

The decomposition (\ref{decompose}) and the Markov property (\ref{Markov1}) allow us to obtain a formula
\begin{eqnarray}  \label{conv}
G((x,t)|(x^0,t^0))=\sum_{{\tiny (x',t')\in \mathcal{B};(x'',t'+1)\in{\tilde{L}^c}}} G((x,t)| (x'',t'+1))\notag\\
\times  W(\{(x',t'),(x'',t'+1)\})G((x',t')|(x^0,t^0)).
\end{eqnarray}
for the convolution of the GGF associated with a general boundary which generalizes the formula (\ref{GGF
convolution}) we used for convolution along fixed time configurations.

We consider two examples of boundaries and corresponding Green functions we deal with below:

\begin{example}
Fixed time boundary,
\begin{equation}
\mathcal{B}^{(\cdot,t)}=\{(x,t):x\in\mathbb{Z}\}
\end{equation}
with $t\in\mathbb{Z}$ being fixed. This is a case already discussed in section (\ref{Green function}). Its Green
function is given by
\begin{equation}
\mathcal{G}^{\mathcal{B}^{(%
\cdot,t)}}((x,t)|(x^0,t^0))=G((x,t)|(x^0,t^0))=p^{x-x^0}q^{t- t^0-x+x^0}
\begin{pmatrix}
t-t_0 \\
x-x^0%
\end{pmatrix}
\label{G1}
\end{equation}
if $0\leq (x-x^0)\leq(t-t^0)$ and $G((x,t)|(x^0,t^0))=0$ otherwise.
\end{example}

\begin{example}
Fixed space boundary,
\begin{equation}
\mathcal{B}^{(x,\cdot)}=\{(x,t):t\in\mathbb{Z}\}  \label{fixed x}
\end{equation}
with $x>x_0\in\mathbb{Z}$ being fixed.  At each point the GF $\mathcal{G}^{%
\mathcal{B}^{(\cdot,t)}}((x,t)|(x^0,t^0))$ has to be supplied with a "jump off the boundary" probability $p$:
\begin{equation}
\mathcal{G}^{\mathcal{B}^{(x,\cdot)}}((x,t)|(x^0,t^0))=p G((x,t)|(x^0,t^0)),
\end{equation}
with $G((x,t)|(x^0,t^0))$ like in (\ref{G1}). Then the normalization and convolution read:
\begin{eqnarray}
\sum_{t=x-x^0+t^0}^{\infty} p^{x-x^0+1}q^{t-t^0-x+x^0}
\begin{pmatrix}
t-t^0 \\
x-x^0%
\end{pmatrix}
=1,
\end{eqnarray}
\begin{eqnarray}
\sum_{t'=t^{0}+x^{\prime }-x_0}^{t-x+x^{\prime }}
\begin{pmatrix}
t^{\prime}-t^0 \\
x^{\prime}-x^0%
\end{pmatrix}
\begin{pmatrix}
t-t^{\prime }-1 \\
x-x^{\prime }-1%
\end{pmatrix}
=%
\begin{pmatrix}
t-t^0 \\
x-x^0%
\end{pmatrix}%
,\,\,\,(x^0\leq x^{\prime }\leq x).
\end{eqnarray}
\end{example}

\subsubsection{$N$ particles}

We generalize now the concept of a boundary set to the $N$-paths. To motivate our approach we first observe the
following. A natural candidate for an $N$-boundary would be an arbitrary collection of $N$ boundaries of the form
(\ref{boundary}).
\begin{equation}
\bm{\mathcal{B}}=\{ \mathcal{B}_1,\dots,\mathcal{B}_N\}
\end{equation}
However, an attempt to do such a straightforward generalization fails. There
are two obstacles. First we note that the factorization of the weight of an $%
N$-path into the product of weights of its parts does not take place for arbitrary endpoints of these parts.
Indeed, consider two adjacent steps of particles $i$ and $i-1$ belonging to the $N$-path $\Pi$
\begin{equation}
\bm{\Pi}\supset\{(x,t),(x,t+1)\}_i\bigcup\{(x+1,t),(x+1,t+1)\}_{i-1}
\end{equation}
Within $\bm{\Pi}$ the weight associated with $\{(x,t),(x,t+1)\}_i$ is
\begin{equation}
W\left(\{(x,t),(x,t+1)\}|\{(x+1,t),(x+1,t+1)\}\right)=1.
\end{equation}
On the other hand if we separate a part $\tilde{\bm{\Pi}}$ of $\bm{\Pi}$, $%
\tilde{\bm{\Pi}}\subset \bm{\Pi}$, so that
\begin{equation}
\tilde{\bm{\Pi}}\supset \{(x,t),(x,t+1)\}_i, \tilde{\bm{\Pi}}%
\not\supset\{(x+1,t), (x+1,t+1)\}_{i-1},
\end{equation}
a similar weight within $\tilde{\bm{\Pi}}$ will be
\begin{equation}
W\left(\{(x,t),(x,t+1)\}_i\right)=q.
\end{equation}
Thus, if we use the decomposition separating $\tilde{\bm{\Pi}}$ from $%
\bm{\Pi}$, the weight is not factorized and hence we cannot use a convolution formula like (\ref{GGF
convolution}). In general, the difficulty
occurs when the endpoints of segments $\{(x,t),(x,t+1)\}_i$ and $%
\{(x+1,t),(x+1,t+1)\}_{i-1}$ occur in the sets $\tilde{L}_{i}$ and $\tilde{L}%
^c_{i-1}$ respectively. While they interact in the common $N$-path,  their weights become different when they do
not "see each other" in different subsets.

The second difficulty is that the GGF we dealt with is defined only for the case of $N$- paths with admissible
configurations at the endpoints. At the
same time, mutual positions of the endpoints of different paths within an $N$%
-path varying in arbitrary boundaries can violate obviously the admissibility constraint.

We need to find such a form of an $N$-boundary that it would ensure (a)  the weight factorizability when the
$N$-path is divided along $\bm{\mathcal{B}}$ and (b)  admissibility of initial and final configurations for any
$N$-path with nonzero weight. Both requirements can be fulfilled with $N$-boundaries constructed as follows. We
introduce a translation operator $\bm{T_{n}}$ translating a set of space-time points by $n$ steps in $x$
direction.
\begin{equation}
T_{n}\bigcup_i\{(x_i,t_i)\}=\bigcup_i\{(x_i+n,t_i)\}
\end{equation}

Now take an arbitrary boundary $\mathcal{B}$ of the form (\ref{boundary}). The $N$-boundary we are looking for
can be constructed from $\mathcal{B}$ as a collection of $N$ copies of $\mathcal{B}$ each translated one step
back with respect to the previous one:
\begin{equation}  \label{N-boundary}
\bm{\mathcal{B}}=(\mathcal{B},T_{-1}\mathcal{B},\dots,T_{-N+1}\mathcal{B})
\end{equation}
Indeed, this construction excludes the possibility to have the endpoints of segments $\{(x,t),(x,t+1)\}_i$ and
$\{(x+1,t),(x+1,t+1)\}_{i-1}$ in the sets $\tilde{L}_{i}$ and $\tilde{L}^c_{i-1}$ simultaneously. In addition, as
soon as path $\Pi_i$ leaves $\mathcal{B}_i$ at $(x_i,t_i)$, the coordinates of point $(x_{i+1},t_{i+1})$ of
$\mathcal{B}_{i+1}$ where path $\Pi_{i+1}$ leaves $\mathcal{B}_{i+1}$ obey $t_{i+1}\geq t_i$ and $x_{i+1}<x_i$
due to the noncrossing conditions (\ref{weight_p}), (\ref{weight_q}). The latter inequalities are the
admissibility conditions.

Using these properties we can straightforwardly generalize the one-particle scheme to $N$ particles, replacing
all the arguments above to its bold font counterparts. As a result, we have the following statement which
provides us with a probabilistic interpretation of the GGF.

\begin{proposition}
\label{GGF prob} Let $\bm{\mathcal{B}}$ be given by (\ref{N-boundary}). A quantity
\begin{equation}
\mathcal{G}^{\bm{\mathcal{B}}}((\bm{x},\bm{t})|(\bm{x}^0, \bm{t}^0))=P_{e}^{%
\bm{\mathcal{B}}}((\bm{x},\bm{t})) G((\bm{x},\bm{t})|(\bm{x}^0,\bm{t}^0))
\end{equation}
defines a probability measure on the set of admissible configurations on $%
\bm{\mathcal{B}}$, and the GGF satisfies the following recurrent relation.
\begin{eqnarray}  \label{Nconv}
G((\bm{x},\bm{t})|(\bm{x}^0,\bm{t}^0))&=&\sum_{(\bm{x}^{\prime },\bm{t}%
^{\prime })\in \bm{\mathcal{B}}; (\bm{x}^{\prime \prime },\bm{t}^{\prime
\prime })\in{\tilde{L}^c}} G(\bm{x},\bm{t}|\bm{x}^{\prime \prime },\bm{t}%
^{\prime \prime })  \notag \\
&\times& W(\{(\bm{x}^{\prime },\bm{t}^{\prime }),(\bm{x}^{\prime \prime },%
\bm{t}^{\prime \prime })\})G(\bm{x}^{\prime },\bm{t}^{\prime }|\bm{x}^0,%
\bm{t}^0).
\end{eqnarray}
\end{proposition}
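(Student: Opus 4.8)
The plan is to mirror, line by line, the single-particle argument that established (\ref{claim}), replacing every object by its $N$-path counterpart and letting the two structural properties of the staircase boundary (\ref{N-boundary}) carry out what was trivial for $N=1$. First I would reduce to the case in which each region bounded by the copies of $\mathcal{B}$ is bounded above in time, so that every $N$-path issuing from the (admissible) starting configuration $(\bm{x}^0,\bm{t}^0)$ and ending in $\tilde{L}^c$ is forced to cross $\bm{\mathcal{B}}$. The $N$-particle analog of Property I is that each constituent path $\Pi_i$ leaves its own copy $\mathcal{B}_i=T_{-i+1}\mathcal{B}$ through a single point $(x_i,t_i)$ and never comes back; the noncrossing rules (\ref{weight_p})-(\ref{weight_q}) together with the one-step offset built into (\ref{N-boundary}) then force $x_{i+1}<x_i$ and $t_{i+1}\geq t_i$, so the collection of exit points is automatically admissible. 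This is exactly property (b), and it is what makes ``admissible configurations on $\bm{\mathcal{B}}$'' the correct state space for the measure.

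Next I would decompose each such $N$-path along the staircase $\bm{\mathcal{B}}$ into a pre-boundary $N$-path landing on the exit configuration $(\bm{x},\bm{t})\in\bm{\mathcal{B}}$, the $N$ elementary exit steps into $\tilde{L}^c$, and a post-boundary $N$-path, which I would extend to a common bounded final time. The decisive ingredient is that the weight of the whole $N$-path factorizes across this cut, which is property (a): the one-step-back shift in (\ref{N-boundary}) was chosen precisely to rule out the situation in which the segments $\{(x,t),(x,t+1)\}_i$ and $\{(x+1,t),(x+1,t+1)\}_{i-1}$ land on opposite sides of the cut, so that no conditioned two-point weight changes its value between the joint and the severed pictures. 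Granting factorization, I would sum over all post-boundary continuations with the exit configuration held fixed; by the $N$-particle analog of the conservation law (\ref{G norm}) the trailing segment sums to $1$, the exit steps sum to $P_e^{\bm{\mathcal{B}}}((\bm{x},\bm{t}))$, and the pre-boundary factor is the GGF $G((\bm{x},\bm{t})|(\bm{x}^0,\bm{t}^0))$, which reproduces the product form of $\mathcal{G}^{\bm{\mathcal{B}}}$. Reading the same decomposition with the exit point kept explicit rather than summed away yields the recurrence (\ref{Nconv}), the $N$-particle Markov identity generalizing (\ref{conv}).

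It then remains to check normalization. In the time-bounded case every $N$-path crosses $\bm{\mathcal{B}}$ exactly once, so summing $\mathcal{G}^{\bm{\mathcal{B}}}$ over all admissible boundary configurations simply reassembles the total weight of the path ensemble, which is $1$. To remove the boundedness hypothesis I would reuse the single-particle tail estimate: the measure of $N$-paths that have not yet met $\bm{\mathcal{B}}$ by time $t$ is bounded by a product of factors of order $O(p^{x-x^0}q^{t-t^0})$ and hence tends to $0$, so the partial sums over $\bm{\mathcal{B}}\cap L_t$ converge to $1$ as $t\to\infty$.

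I expect the one genuinely new point, and hence the main obstacle, to be the careful verification of property (a) for a general staircase cut. For $N=1$ factorization is automatic because the path weight is an unconditioned product, whereas here each two-point weight is conditioned on the neighboring path, and one must confirm case by case that the offset in (\ref{N-boundary}) never severs an interacting pair. Once this is settled, admissibility of the exit configuration, the invocation of (\ref{G norm}), the recurrence (\ref{Nconv}), and the tail estimate all follow by exactly the bookkeeping of the $N=1$ proof, with the determinantal identity of the preceding proposition available should one wish to keep the intermediate GGFs in closed form.
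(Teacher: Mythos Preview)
Your proposal is correct and follows exactly the approach the paper takes: the paper's own ``proof'' consists of the remark that, once properties (a) and (b) of the staircase boundary (\ref{N-boundary}) are established, one can ``straightforwardly generalize the one-particle scheme to $N$ particles, replacing all the arguments above to its bold font counterparts.'' You have simply written out that boldface substitution in detail, including the decomposition, factorization, normalization, and tail estimate, and you correctly identify the verification of property (a) as the only nontrivial ingredient beyond the $N=1$ case.
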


\section{Signed determinantal process}

The notion of GGF allows us to consider the TASEP realizations, where different particles enter and end up their
evolution at different moments of time, such that their initial and final configurations are admissible and the
latter belong to an $N$-boundary of the form (\ref{N-boundary}). The aim of this section is to show that for specific initial conditions the measure over the realizations of these processes defined by the GGF according to Proposition \ref{GGF prob} can be represented as the conditional measure of certain signed determinantal point process, which allows for the calculation of the multi-point correlation functions. Below we give one example of this calculation for  the TASEP evolution of $N$ particles starting at the same moment of time $t=0$ at  sites $x_i^0=1-i, i=1,2,\dots,N$.

Consider a set of the $N$-paths starting at $((0,0),\dots,(1-N,0))$ and terminating by making a step away from $N$-boundary $\bm{\mathcal{B}}$ constructed as $N$ copies of the fixed $x$ boundary (\ref{fixed x}%
):
\begin{equation}  \label{bound}
\bm{\mathcal{B}}=\bigcup_{i=1}^N\{(x-i+N,t_i):t_i\in\mathbb{Z}%
\}_i
\end{equation}
where the subscript $i$ refers to a boundary set associated with a path $%
\Pi_i$ within the $N$-path corresponding to $i$-th particle. We are
interested in the joint distribution
\begin{equation}  \label{joint distrib}
\mathbf{P}=Prob\left(\{t_{n_1}\leq a_1\}\bigcap \{t_{n_2}\leq a_2\}\bigcap \dots \bigcap \{t_{n_m}\leq
a_m\}\right)
\end{equation}
of times $t_{n_1},\dots,t_{n_m}$, when $m\leq N$ particles labeled by indices $n_1<n_2<\dots<n_m$ jump off
positions $x_{n_1},\dots x_{n_m}$ taken from an array $x_i=x+N-i, i=1,\dots,N $ and $t_{n_1} \leq t_{n_2} \leq
\dots \leq t_{n_m}$. Similar calculations for other cases of a wider class of initial and final configurations,
which also can be performed by the GGF technique, will be done elsewhere.

Consider an auxiliary signed point process over the subsets of $\mathbb{Z}%
_{\geq x}\times\{1,\dots,n\}$ of the form
\begin{equation}
\mathcal{T}=\bigcup_{1\leq n \leq N } \{\tau^n_n,<\tau^n_{n-1},<\dots,<\tau^n_{1}\} \subset\mathbb{Z}_{\geq
(x-1)}\times\{1,\dots,n\}  \label{T}
\end{equation}
given by the measure
\begin{eqnarray}  \label{det_measure}
\mathcal{F}\left(\mathcal{T}\right)=\frac{1}{Z_N} \prod_{n=0}^{N-1} \det[%
\phi_{n}(\tau^{n}_{i},\tau^{n+1}_{j})]_{i,j=1}^{ n+1} \det\left[%
\Psi_i^N(\tau_{N-i}^N)\right]_{i,j=0}^{N-1},
\end{eqnarray}
where we define the functions
\begin{equation}
\phi_n(z,y)=\left\{{\
\begin{array}{c}
p,\ y \geq z \\
0,\ y < z%
\end{array}%
} \right.  \label{phi}
\end{equation}
and
\begin{equation}  \label{Psi_N}
\Psi^N_k(t)=(-1)^k \widetilde{F}_{-k}(x+N-k-1,t),
\end{equation}
where
\begin{equation}  \label{tilde{F}_n}
\widetilde{F}_n(x,t)= \frac{1}{2\pi \mathrm{i}}\oint_{\Gamma_0}\frac{dw}{w}
\left(q+\frac{p}{w}\right)^t(1-w)^{-n}w^x,
\end{equation}
The integral representation holds for $t\in\mathbbm{Z}$. This is unlike $%
F_n(x,t)$, which coincides with $\widetilde{F}_n(x,t)$, when $t\geq 0$, and vanishes at $t<0$, see (\ref{FDisc}).

We also introduce fictitious variables $\tau_{n}^{n-1}, 1\leq n\leq N$, which are effectively less than any
$\tau^{n}_j$, so that
\begin{equation}
\phi_n(\tau_{n+1}^{n},\tau^{n+1}_j)\equiv p
\end{equation}
for $j=1,..,n+1$. If we consider $\tau^n_j$ as coordinates of fictitious particles at the $n$-th time step, then
$\tau_{n+1}^{n}$ corresponds to a particle entering into the system from a reservoir on the left \cite{Sasamoto}.

Having defined the measure $\mathcal{M}\left(\mathcal{T}\right)$ we are able to interpret the GGF in the
following useful way.

\begin{proposition}
\label{GGF-measure} Given
\begin{eqnarray}
(\bm{x}^0,\bm{t}^0)&=&((0,0),\dots,(-N+1,0))  \label{in_conf} \\
(\bm{x},\bm{t})&=&((x+N-1,t_1),\dots,(x,t_N))  \label{fin_conf}
\end{eqnarray}
with $t_1\leq t_2\leq\dots\leq t_N\in\mathbb{Z}_{\geq x}$ and $x \geq -N+1$ the GGF associated with the boundary
(\ref{bound}) is a marginal of the measure $\mathcal{M}$
\begin{equation}  \label{marginal}
\mathcal{G}^{\bm{\mathcal{B}}}((\bm{x},\bm{t})|(\bm{x^0},\bm{t^0}))\equiv
p^N G((\bm{x}, \bm{t})|(\bm{x^0},\bm{t^0}))=\mathcal{M}\left(\bigcup_{k=1}^N%
\{\tau_1^k=t_k\}\right)
\end{equation}
\end{proposition}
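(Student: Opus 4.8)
The plan is to evaluate the right-hand side of (\ref{marginal}) directly and to show that marginalizing the determinantal measure (\ref{det_measure}) collapses it onto the GGF determinant (\ref{GGF_det}). I first dispose of the prefactor: for the fixed-$x$ boundary (\ref{fixed x}) each of the $N$ paths leaves $\bm{\mathcal{B}}$ by a single rightward step of weight $p$, so $P_e^{\bm{\mathcal{B}}}=p^N$ and, by the $N$-particle statement of Proposition \ref{GGF prob}, $\mathcal{G}^{\bm{\mathcal{B}}}=p^N G$. It therefore suffices to prove $p^N G((\bm{x},\bm{t})|(\bm{x}^0,\bm{t}^0))=\mathcal{M}(\bigcup_{k}\{\tau_1^k=t_k\})$. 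Inserting the configurations (\ref{in_conf}), (\ref{fin_conf}) into (\ref{GGF_det}) (so that $x_i=x+N-i$, $x_j^0=1-j$, $t_j^0=0$) produces the target determinant with entries $F_{j-i}(x+N-1-i+j,\,t_i)$, and the whole task is to recover this $N\times N$ determinant from the sum defining the marginal.

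The analytic engine is a summation identity for the $F_n$ read off from (\ref{FIden2}) backwards: since $F_{n-1}(y,t)=F_n(y,t)-F_n(y+1,t)$ and $F_n(y,t)$ vanishes for large $y$, one gets $\sum_{y\ge z}F_{n-1}(y,t)=F_n(z,t)$, and the identical relation holds for $\widetilde F$. Hence the operator $p\sum_{y\ge z}(\cdot)$ — which is exactly the action of the kernel $\phi_n$ of (\ref{phi}) on a column indexed by $y$ — raises the lower index of an $F$ (or $\widetilde F$) by one and multiplies it by $p$. This is precisely the bookkeeping the auxiliary variables are built to perform: each summed-out level shifts one index and carries one convolution by $\phi_n$.

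With this in hand I would perform the marginalization level by level using the discrete Cauchy--Binet (Andreief) identity $\det[(A\!\cdot\!B)(u_i,w_j)]=\sum_{y_1<\dots<y_m}\det[A(u_i,y_k)]\det[B(y_k,w_j)]$, read from right to left. Starting from the boundary factor $\det[\Psi^N]$, whose entries carry $\widetilde F_{-k}$ by (\ref{Psi_N}) — the terminal, most negative index of the telescoping — and summing the interior variables of each level (subject to the interlacing order $\tau_n^n<\dots<\tau_1^n$ and the domain $\mathbb{Z}_{\ge x-1}$) fuses successive blocks $\det[\phi_{n-1}]$ and $\det[\phi_n]$ into a single determinant whose entries are the telescoped $F$-sums of the previous step. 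Iterating down to level $1$, the product $\prod_n\det[\phi_n]\,\det[\Psi^N]$ summed over all $\tau_j^n$ with $j\ge 2$ collapses to one $N\times N$ determinant; pinning $\tau_1^k=t_k$ fixes the free time arguments, the signs $(-1)^k$ of (\ref{Psi_N}) together with the fictitious reservoir variables $\tau^n_{n+1}$ produce the correct column permutation, and the accumulated factors of $p$ reassemble into $p^N$ times the GGF weights, with $Z_N$ fixed by matching the total mass (both sides sum to $1$ over the $t_k$). This is the Sasamoto auxiliary-variable construction \cite{Sasamoto} run in reverse, and it yields exactly $p^N\det[F_{j-i}(x+N-1-i+j,t_i)]$.

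The main obstacle I expect is reconciling $F$ with $\widetilde F$. The measure (\ref{det_measure}) is built from $\widetilde F$, defined by its contour integral for all $t\in\mathbb{Z}$, whereas the GGF uses $F_n$, which equals $\widetilde F_n$ only for $t\ge 0$ and vanishes for $t<0$. Since the intermediate arguments generated along the telescoping may correspond to negative effective times, I must show that every term in which $\widetilde F$ would differ from $F$ either vanishes or cancels inside the determinant. This is where the admissibility ordering $t_1\le t_2\le\dots\le t_N$, the constraint $t_i\in\mathbb{Z}_{\ge x}$ with $x\ge -N+1$, and the analytic fact that the contour $\Gamma_0$ encircles the origin while excluding the pole at $w=1$ all enter: they force the offending entries into a triangular block of zeros, exactly as in the block-triangular reduction already used in the proof of (\ref{GGF_det}). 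The remaining, more mechanical points — checking that the Cauchy--Binet summation ranges coincide with the interlacing ranges in $\det[\phi_n]_{i,j=1}^{n+1}$, tracking the fictitious variables, and confirming the overall sign and the value of $Z_N$ — are routine once the $F$ versus $\widetilde F$ issue is settled.
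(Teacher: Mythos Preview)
Your core analytic engine is the wrong one. The auxiliary variables $\tau^n_j$ are \emph{time} coordinates: they live in $\mathbb{Z}_{\ge x}$ and enter $\widetilde F_n$ through its second slot, as is explicit in (\ref{Psi_N}) and (\ref{tilde{F}_n}). Marginalizing the measure therefore means summing over the time argument of $\widetilde F$, not the space argument. Your identity $\sum_{y\ge z}F_{n-1}(y,t)=F_n(z,t)$ telescopes in the space variable with $t$ held fixed; it is the right tool for the original fixed-time Sasamoto construction, where the auxiliary variables are positions, but it simply does not apply here. The identity that actually drives the collapse is the time-summation formula
\[
p\sum_{t=t_1}^{t_2}\widetilde F_n(x,t)=\widetilde F_{n+1}(x+1,t_2+1)-\widetilde F_{n+1}(x+1,t_1),
\]
which raises the index by one \emph{and} shifts the space argument by one. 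Without that space shift you cannot land on the entries $F_{j-i}(x+N-1-i+j,t_i)$ of the GGF determinant; your telescoping would produce the wrong first argument throughout.

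The paper's route is also structurally different from your Cauchy--Binet fusion. It first shows, via the Jacobi--Trudi formula for skew Schur functions specialized at a single variable, that $\prod_n\det[\phi_n]$ equals $p^{N(N+1)/2}$ times the indicator of the interlacing domain $D$ of (\ref{3}); the $\phi$-blocks carry no analytic content beyond enforcing $D$. The marginal is then a single determinant $\det[\widetilde F_{-N+1+i}(x+i,\tau^N_{j+1})]$ summed over $D$, and one performs the summations column by column in a prescribed order (Lemma~\ref{sasamoto sum}) using the time-summation identity above. Each finite sum produces two boundary terms: one either vanishes outright (e.g.\ $\widetilde F_{-N+2+i}(x+1+i,x)=0$) or duplicates a column already present and is killed by antisymmetry. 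This boundary-term bookkeeping, rather than a block-triangular $F$ versus $\widetilde F$ argument, is what closes the telescoping; your $F$/$\widetilde F$ concern is resolved only at the very last step, once all sums are done and the surviving time arguments are the nonnegative $\tau^k_1=t_k$.
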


To prove this statement one represents the GGF as a sum over the auxiliary
time variables in a way similar to that used for space variables in \cite%
{Sasamoto,BorodinFerrari,BFPS,BFS} and for time variables in \cite{Nagao}. To this end we first state and prove
the following lemma.

\begin{lemma}
\label{sasamoto sum} Given initial and final configurations (\ref{in_conf})
and (\ref{fin_conf}) respectively, the GGF $G((\bm{x},\bm{t})|(\bm{x^0},%
\bm{t^0}))$ can be represented as a sum:
\begin{eqnarray}
G((\bm{x},\bm{t})|(\bm{x}^0,\bm{t}^0))&=&(-p)^{\frac{N(N-1)}{2}}\notag \\ \times \sum_{ D}&\det&\!\!\!\!\![%
\widetilde{F}_{-N+1+i}(x+i,\tau_{j+1}^N)]_{i,j=0,\dots,N-1}  \label{2}
\end{eqnarray}
where $\tau_1^j=t_{j},\;j=1,\dots,N,$ and the summation variables take their values in the domain
\begin{equation}
D = \{\tau^j_i\in\mathbb{Z}_{\geq x},2\leq i\leq j \leq N | \tau^j_i\geq \tau^{j-1}_i, \tau^j_i >
\tau^{j+1}_{i+1}\}.  \label{3}
\end{equation}
\end{lemma}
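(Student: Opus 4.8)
The plan is to start from the determinantal representation (\ref{GGF_det}) and to trade, one at a time, every positive lower index of an $\widetilde{F}$ against a summation over a new auxiliary time variable, until only the non-positive indices $-N+1,\dots,0$ of the target survive. First I would substitute the configurations (\ref{in_conf}), (\ref{fin_conf}) into (\ref{GGF_det}): with $x_j^0=1-j$, $t_j^0=0$ and $x_i=x+N-i$ the matrix element becomes $F_{j-i}(x+N-1-i+j,\,t_i)$. Every particle must cover the nonnegative distance $x_i-x_i^0=x+N-1$, so the entries are nonzero only for $t_i\ge 0$ and $F_n$ may be replaced throughout by $\widetilde{F}_n$, the two agreeing for $t\ge 0$ by (\ref{FDisc}), (\ref{tilde{F}_n}). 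Transposing the determinant I obtain $G=\det[\widetilde{F}_{i-j}(x+N-1+i-j,\,t_j)]_{i,j=1}^N$, in which column $j$ carries the fixed time $t_j=\tau^j_1$ and, reading off the indices, column $j$ has to be lowered by $N-j$ units to reach the common row profile $\widetilde{F}_{-N+i}(x-1+i,\cdot)$ of the target.

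The single tool I need is a time-summation identity for $\widetilde{F}$, obtained directly from the contour representation (\ref{tilde{F}_n}) by summing the geometric factor $(q+p/w)^s$ and using $(q+p/w)-1=p(1-w)/w$:
\[
p\sum_{s=s_0}^{s_1-1}\widetilde{F}_n(a,s)=\widetilde{F}_{n+1}(a+1,s_1)-\widetilde{F}_{n+1}(a+1,s_0).
\]
Read from right to left it replaces one entry of index $n+1$ by a sum, over a fresh time variable, of entries of index $n$ with the space argument shifted by $-1$; this is exactly the time analogue of the space construction of \cite{Sasamoto}, and it mirrors the action of the raising kernel $\phi_n(z,y)=p\,\mathbbm{1}[y\ge z]$ of (\ref{phi}). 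The boundary contribution is harmless: a short residue count in (\ref{tilde{F}_n}) shows $\widetilde{F}_m(a,s)=0$ whenever $s<a$, and since every space argument occurring in the reduction satisfies $a\ge x$ while the lower summation limit is $s_0=x$, each term $\widetilde{F}_{n+1}(a+1,s_0)$ with $a+1>x$ vanishes.

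Iterating the identity, I lower column $j$ exactly $N-j$ times, each step introducing one summation variable and one factor $p$, with the boundary terms killed by the vanishing property so that no correction survives. After all columns are reduced every entry has the target row profile $\widetilde{F}_{-N+i}(x-1+i,\cdot)$, the number of new variables is $\sum_{j=1}^{N}(N-j)=\binom{N}{2}$ — matching the count of $\tau^j_i$, $2\le i\le j\le N$, in (\ref{3}) — and the accumulated scalar is $p^{\binom{N}{2}}$. Bringing the reduced columns into the order of (\ref{2}), with $\tau^N_1=t_N$ first, amounts to reversing the $N$ columns and contributes the sign $(-1)^{\binom{N}{2}}$, so that the prefactor assembles into the stated $(-p)^{N(N-1)/2}$.

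The step I expect to be the real work is the identification of the summation domain. Iterating the identity naively produces nested/product ranges — each new variable runs from $x$ up to one less than the time it was split off from — whereas (\ref{2}) asks for the interlacing domain $D$ of (\ref{3}), which in addition couples variables in different columns through $\tau^j_i\ge\tau^{j-1}_i$. The two domains do not coincide, but the naive one contains $D$ and the excess sums to zero: the intermediate variables that never enter the determinant can be summed out first, leaving a weight symmetric under exchange of the top-level column times, while the determinant is antisymmetric under that exchange, so the non-interlacing region cancels in pairs (the diagonal being annihilated by coincident columns). Carrying this antisymmetrisation through for general $N$, and checking that precisely the inequalities defining $D$ survive — with the fixed times $t_j=\tau^j_1$ entering as the boundary through $\tau^j_1>\tau^{j+1}_2$ — is the delicate combinatorial core of the argument; once it is in place, (\ref{2}) follows.
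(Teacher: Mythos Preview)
Your identity, the vanishing of the lower boundary terms at $s_0=x$, the count $\binom{N}{2}$ of auxiliary variables, and the sign from column reversal are all correct and are exactly the ingredients the paper uses. The gap is precisely where you flag it: the passage from your naive product--of--simplices domain to the interlacing domain $D$.

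Your proposed mechanism --- sum out the intermediate variables over the naive domain and then antisymmetrise --- does not work as stated. After summing the intermediates in column $j$ you obtain a weight on the surviving bottom variable that depends on $t_j$ (for $N=3$, column~$1$ carries the weight $t_1-1-\sigma_2$ while column~$2$ carries weight~$1$); since the $t_j$ differ, this weight is \emph{not} symmetric under exchange of the column times, and the antisymmetry of the determinant cannot be invoked directly. One can salvage a version of this argument, but it requires the right bijection between your variables and the $\tau^j_i$, then showing that the weight on the \emph{excess region} (naive minus $D$) alone is symmetric; this in turn uses the admissibility ordering $t_1\le t_2\le\cdots\le t_N$ in an essential way and is a genuine induction, not a one--line symmetry remark.

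The paper sidesteps the whole issue by running the computation in the opposite direction: it starts from the sum over $D$ and performs the $\binom{N}{2}$ summations in a prescribed order, using your identity at each step to raise the index. The point is that when $\tau^N_k$ is summed over $[\tau^{N-1}_k,\,\tau^{N-1}_{k-1}-1]$, the identity produces two boundary terms; the lower one, $\widetilde{F}_{-N+2+i}(x+1+i,\tau^{N-1}_k)$, coincides with the column that was processed at the previous step and therefore vanishes inside the determinant by column repetition. Thus the interlacing inequalities $\tau^j_i\ge\tau^{j-1}_i$ are absorbed automatically, one at a time, by column--coincidence cancellations rather than by a global symmetry argument. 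In the end one obtains the GGF matrix up to the column reversal you describe. If you want to keep your backward direction, the clean fix is to mirror this: when lowering a column, add and subtract the neighbouring (already lowered) column to shift the lower limit from $x$ to the appropriate $\tau^{j-1}_i$; this reproduces $D$ without any residual antisymmetrisation.
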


\begin{proof}
Let $x(a,b)$ be the summation variable with its upper and lower limits $a,b$%
. Then the order of summation in (\ref{2}) is given by the sequence
\begin{eqnarray}
&&\hspace{-0.8cm}\tau^N_N(\tau^{N-1}_{N-1}-1,x),\tau^N_{N-1}(%
\tau^{N-1}_{N-2}-1,
\tau^{N-1}_{N-1}),\tau^{N-1}_{N-1}(\tau^{N-2}_{N-2}-1,x),\dots,  \notag \\
&&\hspace{-0.8cm}\tau^N_{N-2}(\tau^{N-1}_{N-3}-1,\tau^{N-1}_{N-2}),%
\tau^{N-1}_{N-2}
(\tau^{N-2}_{N-3}-1,\tau^{N-2}_{N-2}),\tau^{N-2}_{N-2}(%
\tau^{N-3}_{N-3}-1,x),\dots,  \notag \\
&&\hspace{-0.8cm}\tau^N_k(\tau^{N-1}_{k-1}-1,\tau^{N-1}_k),\dots,
\tau^k_k(\tau^{k-1}_{k-1}-1,x),\dots,  \notag \\
&&\hspace{4.5cm}\tau^N_2(\tau^{N-1}_{1}-1,\tau^{N-1}_2), \dots, \tau^2_2(\tau^{1}_{1}-1,x).  \notag
\end{eqnarray}
The variables $\{\tau^n_i,1\leq i \leq n \leq N \}$ and inequalities between them are shown in Fig. \ref{fig1}
for the case $N=4$. The total number of summations is $N(N-1)/2$.

\begin{figure}[tbp]
\unitlength=1mm \makebox(110,120)[cc] {\psfig{file=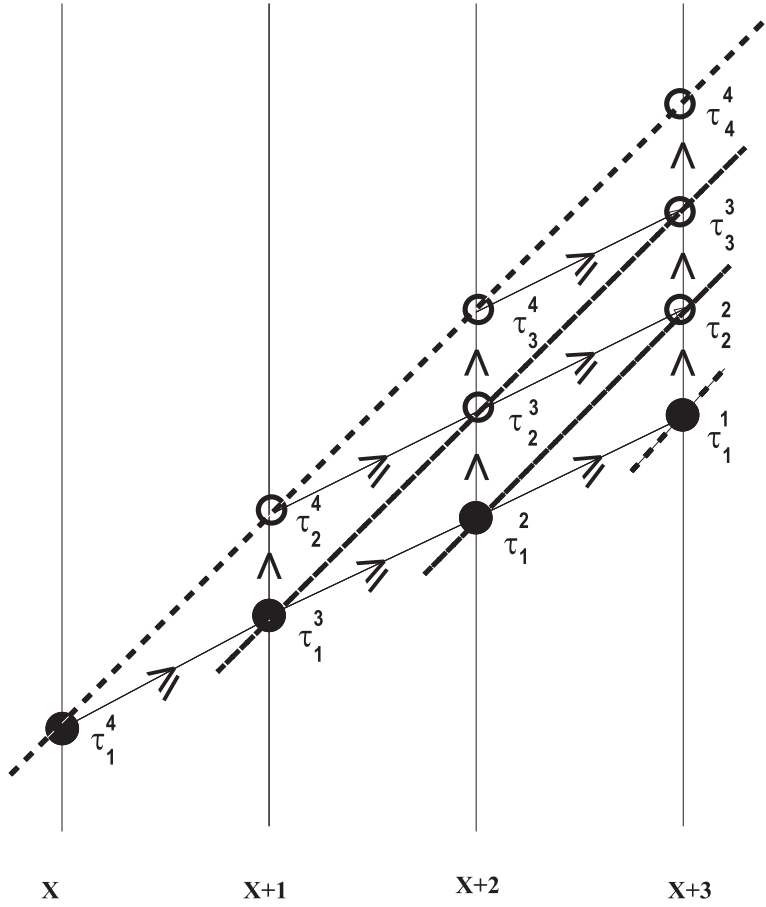,width=100mm}} \caption{The set of auxiliary variables
for $N=4$. Open circles are
integration variables, closed circles are fixed moments of time $\protect\tau%
^4_1=t_4$, $\protect\tau^3_1=t_3$, $\protect\tau^2_1=t_2$, $\protect\tau%
^1_1=t_1$. Arrows show inequalities between variables. Red lines correspond to variables with a fixed upper index
and can be considered as discrete-time steps of an auxiliary process.} \label{fig1}
\end{figure}

The summation formula for the functions $\widetilde{F}_n(x,t)$ is
\begin{equation}
p\sum_{t=t_1}^{t_2}\widetilde{F}_n(x,t) = \widetilde{F}_{n+1}(x+1,t_2+1)-%
\widetilde{F}_{n+1}(x+1,t_1)  \label{5}
\end{equation}
The first summation with $\tau^N_N$ converts the last column in the
determinant from $\widetilde{F}_{-N+1+i}(x+i,\tau^N_N)$ into $\widetilde{F}%
_{-N+2+i}(x+1+i,\tau^{N-1}_{N-1})$, $i=0,1,\dots,N-1$. Note that generally
the summation (\ref{5}) produces two terms. However the second one, $%
\widetilde{F}_{-N+2+i}(x+1+i,x)$, always vanishes. The second summation with
$\tau^N_{N-1}$ produces two terms $\widetilde{F}_{-N+2+i}(x+1+i,%
\tau^{N-1}_{N-2})-\widetilde{F}_{-N+2+i}(x+1+i, \tau^{N-1}_{N-1})$ in the previous column. The second term
coincides with the last column and disappears. The third summation with $\tau^{N-1}_{N-1}$ converts the last
column into $\widetilde{F}_{-N+3+i}(x+2+i,\tau^{N-2}_{N-2})$. Continuing, we notice that each row of summations
with $\tau^{(*)}_k$ for fixed $k$ increases the index $n$ in $\widetilde{F}_n(x,t)$ by one in each column
starting from $(N+1-k)$-th one and increases argument $x$ by one in these columns. Performing all summations, we
obtain the determinant
\begin{equation}
\det \left\vert
\begin{array}{ccc}
\widetilde{F}_{-N+1}(x,\tau^N_1) & \dots & \widetilde{F}_0(x+N-1,\tau^1_1)
\\
\vdots & \ddots & \vdots \\
\widetilde{F}_0(x+N-1,\tau^N_1) & \dots & \widetilde{F}_{N-1}(x+2N-2,%
\tau^1_1)%
\end{array}%
\right\vert .
\end{equation}%
As the variables $\tau_1^j$ are nonnegative by assumption for $j=1,\dots,N$,
we can omit the tildes, to obtain the determinant coinciding with $G((\bm{x},%
\bm{t})|(\bm{x}^0, \bm{t}^0))$ up to a factor $(-1)^{[N/2]}$ after reversing of the order of columns.
\end{proof}

Now we are in a position to prove the previous proposition.

\begin{proof}
\textit{Proof of Proposition \ref{GGF-measure}} As is illustrated in Fig.\ref%
{fig1}, the inequalities in (\ref{T}) follow directly from the inequalities
in (\ref{3}). Then, the statement of the proposition follows from Lemma \ref%
{sasamoto sum} and the fact that
\begin{equation}
\prod_{n=0}^{N-1}\det[\phi_{n}(\tau^{n}_{i},\tau^{n+1}_{j})]_{1\leq i,j\leq
n+1}=\left\{%
\begin{matrix}
p^{N(N+1)/2}, & \mathcal{T}\in D \\
0, & \mathcal{T}\notin D%
\end{matrix}%
\right.
\end{equation}
The latter can be expressed in the language of symmetric functions (see \cite%
{Macdonald} for notations and the facts used). Specifically, up to the coefficient of $p$ the function
$\phi_n(x,y)$ is nothing but the complete
symmetric function specialized at single parameter equal to 1, $%
\phi_n(x,y)=p\, h_{y-x}(1)$. Then, by the generalized Jacobi-Trudy formula
\begin{equation}  \label{Jacoby Trudy}
\det[\phi_{n}(\tau^{n}_{i},\tau^{n+1}_{j})]_{1\leq i,j\leq n+1}=p^{(n+1)} s_{\lambda^{(n+1)}/\lambda^{(n)}}(1),
\end{equation}
where we defined partitions
\begin{equation*}
\lambda^{(n)}= \tau^n_1+1-x,\tau^n_2+2-x\dots,\tau^n_n+n-x,0,\dots
\end{equation*}
for $n=1,\dots,N$. Here $s_{\lambda^{(n)}/\lambda^{(n-1)}}(x)$ is the skew
Schur  function. For $x=1$, it is equal $1$ provided that $%
\lambda^{(n)}\supset  \lambda^{(n-1)}$ and the skew partition  $%
(\lambda^{(n)}-\lambda^{(n-1)})$ is a horizontal strip. If at least one of
these conditions is violated, $s_{\lambda^{(n)}/\lambda^{(n-1)}}(1)=0$  \cite%
{Macdonald}. Translated back to the variables $\tau_i^j$, these two conditions exactly coincide with the
inequalities (\ref{3}) defining the domain $D$.
\end{proof}

The measure $\mathcal{M}(\mathcal{T})$ has the same functional form as in \cite{Sasamoto,BorodinRains,BFPS}. In
particular, Lemma 3.4 of \cite{BFPS} can be directly applied here. There is a difference in the form of functions
$\Psi_n^N(t)$ and in the space where the variables $\tau_j^i$ live, which is $%
\mathbb{Z}_{\geq x}$ rather than $\mathbb{Z}$. This difference does not affect the applicability of the Lemma,
which is formulated in a rather abstract fashion, though has to be taken into account when obtaining the final
expressions. According to Lemma 3.4 the multi-point correlation functions of $\mathcal{M}\mathcal{(T)}$ are
determinantal. To write down the kernel we introduce
\begin{equation}  \label{sec_phi}
\phi^{(n_1,n_2)}(x,y)=\left\{%
\begin{array}{ll}
(\phi_{n_1}*\phi_{n_1+1}*\dots*\phi_{n_2-1}) (x,y), & n_1<n_2 \\
0, & n_1\geq n_2%
\end{array}%
\right.,
\end{equation}
where $(a*b)(x,y)=\sum_{z\in\mathbf{Z}_{\geq x}}a(x,z)b(z,y)$, and
\begin{equation}  \label{Psi_n}
\Psi^n_{n-j}(\tau)=(\phi^{n,N}*\Psi^N_{N-j})(\tau).
\end{equation}

Consider functions
\begin{equation}  \label{space}
\{(\phi_0*\phi^{(1,n)})(\tau^0_1,\tau),\dots,(\phi_{n-2}*\phi^{n-1,n})(%
\tau^{n-2}_{n-1},\tau), \phi_{n-1}(\tau^{n-1}_n,\tau)\}.
\end{equation}
They are linearly independent and hence can serve as a basis of an $n$%
-dimensional linear space $V_n$. We construct another basis of $V_n$, $%
\{\Phi^n_j(\tau),j=0, \dots,n-1\}$, which is fixed by the orthogonality relations
\begin{equation}  \label{orthogonality}
\sum_{\tau\in\mathbb{Z}_{\geq x}}\Phi^n_i(\tau)\Psi^n_j(\tau)=\delta_{i,j}.
\end{equation}
Then, under the
\begin{description}
\item[Assumption (A)]: $\phi_n(\tau_{n+1}^n,\tau)=c_n \Phi_0^n(\tau)$
with some $ c_n\neq0$, $n=1,\dots,N$,
\end{description}
the kernel has the form
\begin{equation}  \label{Prop4.3.2}
K(n_1,\tau_1;n_2,\tau_2)=-\phi^{(n_1,n_2)}(\tau_1,\tau_2)+\sum_{k=1}^{n_2}
\Psi_{n_1-k}^{n_1}(\tau_1)\Phi_{n_2-k}^{n_2}(\tau_2).
\end{equation}
Let us now make this construction explicit.

\begin{lemma}
The functions $\Psi^n_j(\tau)$, $\Phi^n_j(\tau)$ have the following integral representation.
\begin{eqnarray}  \label{Phi_int}
\Psi^n_k(\tau)&=&\oint_{\Gamma_{0,1}}\frac{dw}{2\pi i}\left(1-p\frac{w-1}{w}%
\right)^{\tau} (w-1)^kw^{x+N-k-2} \\
\Phi^n_j(\tau)&=&p \oint_{\Gamma_1}\frac{dv}{2\pi i} \left(1-p\frac{v-1}{v}%
\right)^{-\tau-1} (v-1)^{-j-1}v^{j-N-x-1}\label{Psi_int}
\end{eqnarray}
The contours of integration $\Gamma_{0,1},\Gamma_{1}$ encircle the poles $%
w=0,1$ and $v=1$ anticlockwise respectively, leaving all the other singularities outside.
\end{lemma}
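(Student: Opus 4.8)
The plan is to handle the two families by different means, reflecting the fact that the $\Psi^n_k$ are defined explicitly through (\ref{Psi_N})--(\ref{Psi_n}) while the $\Phi^n_j$ are fixed only implicitly, as the basis of $V_n$ biorthogonal to the $\Psi^n_j$ via (\ref{orthogonality}). For the former I would compute the representation directly; for the latter I would take the proposed integral as an ansatz and verify the two properties that characterize it, closing the argument by uniqueness.

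For $\Psi^n_k$ I would start from the contour-integral definition of $\widetilde F_n$, rewriting its weight as $q+p/w=1-p\frac{w-1}{w}$ and $(1-w)^{-n}=(-1)^n(w-1)^{-n}$; substituting into (\ref{Psi_N}) already casts $\Psi^N_k$ in the asserted form, the enlargement of the contour from $\Gamma_0$ to $\Gamma_{0,1}$ being harmless because $k\geq 0$ leaves no pole at $w=1$. I would then evaluate $\Psi^n_{n-j}=\phi^{(n,N)}*\Psi^N_{N-j}$ by induction on the number $N-n$ of convolutions. Each convolution with $\phi_m$ is a $p$-weighted partial summation, and by (\ref{5}) with the upper limit pushed to infinity it raises the lower index of $\widetilde F$ by one and shifts its spatial argument by one; equivalently it multiplies the integrand by $\frac{w}{w-1}$. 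The boundary term at the upper limit vanishes because every $\widetilde F$ occurring in these sums carries a non-positive lower index and hence decays to $0$ as $t\to\infty$, so the infinite summation is legitimate. After $N-n$ steps all dependence on $n$ cancels and the stated, $n$-independent formula for $\Psi^n_k$ emerges with the nonnegative power $(w-1)^{k}$, $k=n-j$.

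For $\Phi^n_j$ I would first check biorthogonality by inserting both representations into (\ref{orthogonality}). With $A=q+p/w$ and $B=q+p/v$ the sum over $\tau\in\mathbb{Z}_{\geq x}$ is geometric and collapses, through $B-A=p\frac{w-v}{vw}$, to the factor $(A/B)^x/(B-A)$; the $p$'s and the pure powers $w^{\pm x}$, $v^{\pm x}$ cancel and one is left with a double contour integral of $\frac{1}{w-v}(v-1)^{-i-1}v^{i-N}(qv+p)^{-x}(w-1)^jw^{N-j-1}(qw+p)^x$. Doing the inner $w$-integral over $\Gamma_{0,1}$ by residues---for $0\leq j\leq n-1$ the only enclosed pole is the simple one at $w=v$---replaces $w$ by $v$, so that $(qw+p)^x$ cancels $(qv+p)^{-x}$, and the remaining $v$-integral reduces to $\oint_{\Gamma_1}\frac{dv}{2\pi i}\,v^{i-j-1}(v-1)^{j-i-1}$, whose residue at $v=1$ is $\delta_{ij}$ (the case $i<j$ vanishing because the integrand is then regular at $v=1$). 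To finish I would show that the proposed $\Phi^n_j$ genuinely lie in $V_n$: running the same summation-formula machinery downward, I would derive integral representations for the spanning functions in (\ref{space}), exhibit each in the form $\oint_{\Gamma_1}(1-p\frac{v-1}{v})^{-\tau-1}R(v)\,dv$ with $R$ rational and singular only at $v=1$ of the appropriate degree, and conclude that they span exactly the $n$-dimensional space containing $\Phi^n_0,\dots,\Phi^n_{n-1}$. Biorthogonality together with this membership then identifies the ansatz with the unique basis fixed by (\ref{orthogonality}).

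I expect the membership $\Phi^n_j\in V_n$ to be the principal obstacle: biorthogonality alone does not pin down $\Phi$ without the span constraint, and verifying the latter forces one to compute the integral representations of the convolution chains in (\ref{space}) and compare linear spans, which is markedly more laborious than the clean residue computation behind biorthogonality. A secondary point demanding care is the relative placement of contours in that residue computation: the geometric summation over $\tau$ converges only where $|A|<|B|$, so $\Gamma_{0,1}$ and $\Gamma_1$ must be arranged (and the resulting identity continued analytically) so that $v$ sits inside the $w$-contour and $w=v$ is the only pole collected.
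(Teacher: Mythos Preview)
Your proposal is correct and follows essentially the same architecture as the paper: build $\Psi^n_k$ from $\Psi^N_k$ by iterating the convolution as a geometric summation under the integral sign (with contour chosen so that $|q+p/w|<1$), then verify the $\Phi$-ansatz by biorthogonality together with membership in $V_n$. Your treatment of the contour placement for the geometric sum in the orthogonality check is likewise the paper's.

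The one substantive difference is the step you flag as the principal obstacle. The paper dissolves it in a single line by observing that the spanning functions in (\ref{space}) are iterated partial sums of the constant function $p$, hence polynomials in $\tau$ of degrees $0,1,\dots,n-1$; thus $V_n=\mathrm{Span}\{1,\tau,\dots,\tau^{n-1}\}$. On the other side, expanding the integrand of $\Phi^n_j$ about $v=1$ shows immediately that the residue is a polynomial in $\tau$ of degree exactly $j$. So membership and linear independence come for free, and there is no need to derive integral representations for the individual functions in (\ref{space}). Your route via explicit integral formulas for those functions would work, but is considerably more laborious than this polynomial-degree observation.
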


\begin{proof}
To construct $\Psi^n_k(\tau)$ we start with $\Psi^N_k(\tau)$, $k\geq 0$,
which is given by (\ref{Psi_N}), and follow recursively the definition (\ref%
{Psi_n})
\begin{equation}  \label{Psirec}
\Psi^{i-1}_{k-1}(\tau)=p\sum_{t=\tau}^\infty \Psi^i_k(t).
\end{equation}
Note that for $k\geq 0$ the expression under the integral (\ref{Psi_int}) for $\Psi^N_k(\tau)$ has no pole at
$w=1$, and therefore the integral
complies with the definition (\ref{FDisc}) of the function $\widetilde{F}%
_n(x,\tau)$ and the definition (\ref{Psi_N}) of $\Psi^N_k(t)$. Then, substituting (\ref{Psi_int}) into
(\ref{Psirec}), we can interchange integration and summation. To this end, we must ensure that the resulting sum
converges uniformly at the contour of integration. This is satisfied if
\begin{equation}
\left|1-p\frac{w-1}{w}\right| <1.
\end{equation}
The latter is true for a sufficiently large circle, $|w|\gg 1$, where this expression is arbitrarily close to
$(1-p)$. Summing the geometric
progression, we obtain the integral representation (\ref{Psi_int}) for $n<N$%
. When $k$ becomes negative, keeping the contour large ensures that the pole $w=1$ turns out to be inside the
contour together with $w=0$. At the moment we also have freedom of putting the pole $w=p/(p-1)$ inside or outside
the contour. As this is not a pole for positive $\tau$, it does not affect the convergence, though changes the
part of the sum with $\tau<0$. As it will be seen from the proof of orthogonality, it should be left outside the
contour.

Now we are in a position to define the functions $\Phi_k^n(\tau)$. First we note that
\begin{equation}
V_n=\mathrm{Span}\{\tau^{n-1},\dots,1\}
\end{equation}
and hence can be generated by any $n$ polynomials of powers from zero to $%
(n-1)$. Indeed the function $\Phi_j^n(\tau)$ defined by (\ref{Phi_int}) is a
polynomial of power $j$. To check the orthogonality relation (\ref%
{orthogonality}) we substitute the integral representations (\ref{Psi_int}, %
\ref{Phi_int}) into (\ref{orthogonality}) and interchange the order of
summation and integrations. Note that choosing  the contours as $%
\Gamma_{0,1}=\{|w|=R\}$ $\Gamma_1=\{|v-1|=1/R\}$ with $R$ large enough we can make the ratio arbitrarily close to
(1-p) and hence
\begin{equation}
\left|\frac{1-p (w-1)/w}{1-p (v-1)/v}\right|<1
\end{equation}
uniformly in $\Gamma_{0,1}\times\Gamma_1$. Therefore the interchange is allowed. After the summation we have
\begin{eqnarray}  \label{ortsum}
&& \sum_{\tau \in \mathbb{Z}_{\geq x}}\Phi^n_i(\tau)\Psi^n_j(\tau) \\
&&= \oint_{\Gamma_{1}}\frac{dv}{2 \pi \mathrm{i} } \oint_{\Gamma_{0,v}}\frac{%
dw} {2 \pi \mathrm{i} } \frac{\left(\frac{w-1}{w}\right)^k \left(\frac{v}{v-1%
}\right)^{j+1}} {v w (w-v)}\left(\frac{w}{v}\right)^{N}\left(\frac{w q+p}{v q+p}\right)^x,  \notag
\end{eqnarray}
where contour $\Gamma_{0,v}$ contains the pole $w=v$ inside for each $v\in \Gamma_1$, i.e. the whole $\Gamma_1$
is inside $\Gamma_{0,v}$. From $0\leq k,j< N$ we see that $w=1,0$ are not poles of the expression under the
integral. Therefore we can perform the integration in $w$ taking into account the contribution of the remaining
simple pole $w=v$, which yields
\begin{equation}
\mathrm{(}\ref{ortsum})=\oint_{\Gamma_{1}}\frac{dv}{2 \pi \mathrm{i} v^2 } \left(\frac{v-1} {v}\right)^{k-j-1}.
\end{equation}
Making the substitution $z=(v-1)/v$, we obtain the desired orthogonality (%
\ref{orthogonality})
\begin{equation}
\mathrm{(}\ref{ortsum})=\oint_{\Gamma_{0}}\frac{dz}{2 \pi \mathrm{i} } z^{k-j-1}=\delta_{k,j}.
\end{equation}
\end{proof}

Now we note that assumption (A) is fulfilled,
\begin{equation}
\Phi_0^n(\tau)=p=\phi_n(\tau_{n+1}^n,\tau),
\end{equation}
and we can write the kernel. Observe that the function $\phi_n(x,y)$ can be written in the form
\begin{equation}  \label{phi_n_int}
\phi_n(x,y)=\frac{p}{2\pi i}\oint_{\Gamma_{0,-1}}dw\frac{1}{w(w+1)^{x-y}}.
\end{equation}
Several convolutions with itself result in
\begin{eqnarray}  \label{phi^n_int}
\phi^{(n_1,n_2)}(\tau_1,\tau_2)&=& \mathbbm{1}(n_2>n_1)p^{n_2-n_1}\oint_{%
\Gamma_{0,-1}}\frac{dw}{2\pi i}  \notag \\
& \times&\left(\frac{w+1}{w}\right)^{n_2-n_1} (w+1)^{\tau_2-\tau_1-1}.
\end{eqnarray}
After a variable change
\begin{equation*}
w=\left(1-p\frac{z-1}{z}\right)^{-1}-1
\end{equation*}
we arrive at
\begin{eqnarray}  \label{phi^n_int_1}
\phi^{(n_1,n_2)}(\tau_1,\tau_2)&=&\mathbbm{1} (n_2>n_1)p \oint_{\Gamma_{1,0}}%
\frac{dz}{2\pi i z^2}  \notag \\
&\times&\left(\frac{z-1} {z}\right)^{n_1-n_2} \left(1-p\frac{z-1}{z}%
\right)^{\tau_1-\tau_2-1}.
\end{eqnarray}

Now let us perform the summation in (\ref{Prop4.3.2}) using the integral
representations (\ref{Psi_int},\ref{Phi_int}) for $\Psi_k^n(\tau)$ and $%
\Phi_k^n(\tau)$ respectively. As the integral representation of $%
\Phi_k^n(\tau)$ is identically zero for $k<0$, the summation can formally be extended to $k=\infty$. Then, we
bring the summation under the integrals, which is allowed provided that
\begin{equation}
\left|\frac{w(v-1)}{v(w-1)}\right|<1
\end{equation}
uniformly in $\Gamma_{0,1}\times\Gamma_1$. This is satisfied, for example, for $\Gamma_{0,1}=\{|w|=R\}$ and
$\Gamma_{1}=\{|v-1|=1/R\}$ for $R$ large enough. The contours can be continuously deformed without crossing the
poles. In particular, contour $\Gamma_1$ is constrained to be inside contour $\Gamma_{1,0}$.

Then, the summation yields
\begin{eqnarray}  \label{sum phi psi}
&&\sum_{k=1}^{\infty}\Psi_{n_1-k}^{n_1}(\tau_1)\Phi_{n_2-k}^{n_2}(\tau_2) \\
&&=p \oint_{\Gamma_1}\frac{dv}{2\pi i v}\oint_{\Gamma_{0,v}}\frac{dw}{2\pi i
w} \frac{(1-p(\frac{w-1}{w}))^{\tau_1}(\frac{w-1}{w})^{n_1}(w/v)^{x+N}} {%
(1-p(\frac{v-1}{v}))^{\tau_2+1}(\frac{v-1}{v})^{n_2}(w-v)} .  \notag
\end{eqnarray}
After substituting (\ref{phi^n_int_1}) and (\ref{sum phi psi}) into (\ref%
{Prop4.3.2}) we arrive at the final expression for the correlation kernel. The result can be summarized as
follows.

\begin{proposition}
The correlation kernel of the measure $\mathcal{M}$, (\ref{det_measure}), is
\begin{eqnarray}  \label{cor_kernel}
&&K(n_1,\tau_1;n_2,\tau_2)= \\
&&p \oint_{\Gamma_1}\frac{dv}{2\pi i v}\oint_{\Gamma_{0,v}}\frac{dw}{2\pi i w%
} \frac{(1-p(\frac{w-1}{w}))^{\tau_1}(\frac{w-1}{w})^{n_1}(w/v)^{x+N}} {(1-p(%
\frac{v-1}{v}))^{\tau_2+1}(\frac{v-1}{v})^{n_2}(w-v)}  \notag \\
&&-\mathbbm{1}(n_2>n_1) \oint_{\Gamma_{1,0}}\frac{p dz}{2\pi i z^2} \left(%
\frac{z-1}{z}\right)^{n_1-n_2} \left(1-p\frac{z-1}{z}\right)^{\tau_1-%
\tau_2-1}.  \notag
\end{eqnarray}
\end{proposition}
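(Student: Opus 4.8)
The plan is to assemble the kernel from two pieces that the preceding computations have already prepared. By Lemma 3.4 of \cite{BFPS}, once Assumption (A) has been verified---which it has, since $\Phi_0^n(\tau)=p=\phi_n(\tau_{n+1}^n,\tau)$---the correlation kernel of the determinantal process $\mathcal{M}$ takes the form (\ref{Prop4.3.2}), i.e.\ a convolution term $-\phi^{(n_1,n_2)}(\tau_1,\tau_2)$ plus the finite bilinear sum $\sum_{k=1}^{n_2}\Psi_{n_1-k}^{n_1}(\tau_1)\Phi_{n_2-k}^{n_2}(\tau_2)$. It therefore suffices to rewrite each of these two terms as a contour integral and substitute into (\ref{Prop4.3.2}); the proposition is a summary of the outcome.

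For the convolution term I would simply invoke (\ref{phi^n_int_1}), which was obtained by iterating the single-step representation (\ref{phi_n_int}) of $\phi_n$ and then changing variables to the $z$-plane. This already matches the last line of (\ref{cor_kernel}), and the indicator $\mathbbm{1}(n_2>n_1)$ it carries correctly encodes the vanishing of $\phi^{(n_1,n_2)}$ for $n_1\geq n_2$ built into (\ref{sec_phi}); with the overall sign from $-\phi^{(n_1,n_2)}$ it reproduces the negative term. No further work is needed here beyond recording the sign.

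For the bilinear sum the essential steps are, first, to extend the upper summation limit from $k=n_2$ to $k=\infty$, which is harmless because the integral representation (\ref{Phi_int}) of $\Phi^{n_2}_{n_2-k}$ vanishes identically once $n_2-k<0$; second, to insert the integral representations (\ref{Psi_int},\ref{Phi_int}) for the two factors and interchange the summation with the double contour integral; and third, to recognize the $k$-dependence as a geometric series in the ratio $w(v-1)/\bigl(v(w-1)\bigr)$ and sum it, the summation producing precisely the simple pole at $w=v$ seen in the denominator of (\ref{sum phi psi}). Substituting (\ref{phi^n_int_1}) and (\ref{sum phi psi}) back into (\ref{Prop4.3.2}) then gives (\ref{cor_kernel}) verbatim.

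The main obstacle is justifying the interchange of summation and integration together with the convergence of that geometric series, and this is exactly where the freedom to deform contours is used. Choosing $\Gamma_{0,1}=\{|w|=R\}$ and $\Gamma_1=\{|v-1|=1/R\}$ with $R$ large makes $\left|w(v-1)/\bigl(v(w-1)\bigr)\right|<1$ hold uniformly on $\Gamma_{0,1}\times\Gamma_1$, so the interchange is legitimate and the series converges to a factor proportional to $1/(w-v)$. One must also track which singularities the deformed contours enclose---in particular that $\Gamma_1$ lies inside $\Gamma_{0,v}$ so that the pole $w=v$ is captured, and that for the convolution term $\Gamma_1$ lies inside $\Gamma_{1,0}$---since the final kernel is only well defined once these contour specifications are fixed consistently. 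With the contours pinned down, the substitution into (\ref{Prop4.3.2}) is purely formal and completes the proof.
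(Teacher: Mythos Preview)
Your proposal is correct and follows essentially the same approach as the paper: it applies the BFPS kernel formula (\ref{Prop4.3.2}) after verifying Assumption (A), handles the convolution term via (\ref{phi^n_int_1}), extends the bilinear sum to $k=\infty$ using the vanishing of $\Phi_{n_2-k}^{n_2}$ for $k>n_2$, and then interchanges sum and integration on large contours to perform the geometric series summation yielding (\ref{sum phi psi}). The only minor slip is your remark that ``for the convolution term $\Gamma_1$ lies inside $\Gamma_{1,0}$''---the convolution term is a single $z$-integral with no $\Gamma_1$ involved; the nesting constraint $\Gamma_1\subset\Gamma_{0,v}$ pertains solely to the double integral.
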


\section{TASEP current correlations}
In present section we obtain the current correlation function of the TASEP with step initial conditions and study
its asymptotical behaviour. Note that the dynamics of particles is independent of the particles on the
left. Therefore, one can limit the consideration to $N$ rightmost particles. The previous results furnish us with all the means required to write the quantity of interest (\ref{joint distrib}) as a Fredholm determinant, which is the content of the following theorem:

\begin{theorem} Given $N$ particles starting at the same moment of time $t=0$ at  sites $x_i^0=1-i, i=1,2,\dots,N$, let  $t_{n_1} \leq t_{n_2} \leq \dots \leq t_{n_m}$ be the moments of time, when $m\leq N$ particles labeled by indices $n_1<n_2<\dots<n_m$ jump off positions $x_{n_1},\dots x_{n_m}$, respectively, where $x_i=x+N-i$.
Then, their joint distribution $\mathbf{P}$, defined in (\ref{joint distrib}),
 can be represented as Fredholm determinant
\begin{eqnarray}
\mathbf{P}=\det\left(\mathbbm{1}-\chi_a K \chi_a)\right)_{l^2(\{n_1,\dots,n_m\}\times \mathbbm{Z}_{\geq x})},
\label{fredholm}
\end{eqnarray}
where $\chi_a(n_i)(t)=\mathbbm{1}(t>a_i)$.
\end{theorem}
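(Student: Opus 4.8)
The plan is to reduce the statement to the standard gap-probability formula for a (signed) determinantal point process, the reduction being supplied by the marginal identification of Proposition~\ref{GGF-measure}. First I would translate the TASEP event in (\ref{joint distrib}) into an event on the auxiliary process $\mathcal{T}$. By Proposition~\ref{GGF prob} the family $\mathcal{G}^{\bm{\mathcal{B}}}((\bm{x},\bm{t})|(\bm{x}^0,\bm{t}^0))$ is a genuine probability measure on the admissible exit configurations of the boundary (\ref{bound}), so $\mathbf{P}$ is obtained by summing $\mathcal{G}^{\bm{\mathcal{B}}}$ over all exit configurations with $t_{n_j}\le a_j$, the remaining exit times $t_k$ ($k\notin\{n_1,\dots,n_m\}$) being left free. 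Inserting the marginal identity (\ref{marginal}), $\mathcal{G}^{\bm{\mathcal{B}}}=\mathcal{M}\!\left(\bigcup_{k}\{\tau^k_1=t_k\}\right)$, and summing over all $t_k$ subject to the constraints $t_{n_j}\le a_j$, the free sums over the unconstrained $\tau^k_1$ disappear into the full measure and one is left with
\[
\mathbf{P}=\mathcal{M}\!\left(\tau^{n_1}_1\le a_1,\ \dots,\ \tau^{n_m}_1\le a_m\right).
\]

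Second, I would identify this as a hole (gap) event. By the ordering displayed in (\ref{T}), namely $\tau^n_n<\tau^n_{n-1}<\dots<\tau^n_1$, the variable $\tau^n_1$ is the topmost (largest) of the $n$ points at level $n$, and at least one point always sits at each level. Hence the event $\{\tau^{n_j}_1\le a_j\}$ is precisely the event that $\mathcal{T}$ carries no point at level $n_j$ in the half-line $\{\tau>a_j\}$. Consequently $\mathbf{P}$ equals the probability that $\mathcal{T}$ has no point in the set
\[
B=\bigcup_{j=1}^{m}\{n_j\}\times\{\tau\in\mathbb{Z}_{\ge x}:\tau>a_j\},
\]
which is exactly the support of the projection $\chi_a$, since $\chi_a(n_i)(t)=\mathbbm{1}(t>a_i)$.

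Third, I would invoke the determinantal structure. The measure $\mathcal{M}$ is a signed determinantal process whose correlation functions are $\det[K(n_i,\tau_i;n_j,\tau_j)]$ with the kernel (\ref{cor_kernel}) of the Proposition following (\ref{Prop4.3.2}), resting on Lemma~3.4 of \cite{BFPS}. For any such process the inclusion--exclusion expansion of a hole probability over $B$ yields
\[
\mathrm{Prob}(\text{no point in }B)=\sum_{n\ge0}\frac{(-1)^n}{n!}\sum_{y_1,\dots,y_n\in B}\det[K(y_i,y_j)]_{i,j=1}^{n}=\det\!\left(\mathbbm{1}-\chi_B K\chi_B\right),
\]
and with $B$ as above this is precisely (\ref{fredholm}) on $l^2(\{n_1,\dots,n_m\}\times\mathbb{Z}_{\ge x})$.

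The main obstacle is analytic rather than structural. Because the points live on the semi-infinite lattice $\mathbb{Z}_{\ge x}$ and $B$ extends to $\tau=+\infty$, the above series is genuinely infinite and must be shown to converge absolutely; equivalently, $\chi_a K\chi_a$ must be trace class. The requisite decay I would read off from (\ref{cor_kernel}): choosing $\Gamma_{0,1}=\{|w|=R\}$, $\Gamma_1=\{|v-1|=1/R\}$ with $R$ large, the factor $\left(1-p\tfrac{w-1}{w}\right)^{\tau_1}=O\!\left((1-p)^{\tau_1}\right)$ dominates the growth of $\left(1-p\tfrac{v-1}{v}\right)^{-\tau_2-1}$ after the $v$-integration, giving exponential decay of $K$ along the diagonal as $\tau\to\infty$. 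Since $\mathcal{M}$ is only a \emph{signed} measure, I would also check that all the sum/integral interchanges implicit in the inclusion--exclusion step and in passing to the Fredholm determinant are licensed by this same absolute convergence; it is this bookkeeping, and not the gap-probability identification, that requires the bulk of the care.
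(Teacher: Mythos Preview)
Your proposal is correct and follows essentially the same route as the paper: translate the TASEP event to $\mathcal{M}\!\left(\tau^{n_1}_1\le a_1,\dots,\tau^{n_m}_1\le a_m\right)$ via Proposition~\ref{GGF-measure}, recognize this as a gap event for the determinantal process, and apply inclusion--exclusion to obtain the Fredholm determinant; the paper likewise defers the trace-class verification to the literature \cite{BorodinFerrari,BFS}, whereas you sketch the exponential decay directly from the integral representation.
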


\begin{proof}
Having written the correlation kernel for the measure $\mathcal{M}$, we can use the correspondence stated in the
Proposition \ref{GGF-measure} to write
\begin{eqnarray}
\mathbf{P}&\equiv &Prob\left(\{t_{n_1}\leq a_1\}\bigcap \{t_{n_2}\leq
a_2\}\bigcap \dots \bigcap \{t_{n_m}\leq a_m\}\right)  \notag \\
&=&\mathcal{M}\left(\mathcal{T}\supset\{\tau_1^{n_1}\leq a_1\}\bigcap \dots \bigcap \{\tau_1^{n_m}\leq
a_m\}\right),
\end{eqnarray}
By the exclusion-inclusion principle it is written as a sum
\begin{equation}  \label{fredholm1}
\mathbf{P} =\sum_{n\geq 0}\frac{(-1)^n}{n!}\sum_{i_1,\dots,i_n=1}^m
\sum_{\tau_{1}>a_{i_1}}\dots\sum_{\tau_{n}>a_{i_n}}\det\{K(n_{i_k},%
\tau_k;n_{i_j}, \tau_j)\}_{k,j=1}^n
\end{equation}
which yields the Fredholm determinant. The formula holds, provided that $%
\chi_a K \chi_a$ is a trace class operator. We do not give a proof of trace class property here and refer the
reader to similar proofs in other papers \cite{BorodinFerrari,BFS}.
\end{proof}

\subsection{Scaling behaviour}

We are interested in the behaviour of the Fredholm determinant in a scaling limit. Let us look at a bulk particle
with number
\begin{equation}  \label{n}
n=\nu L
\end{equation}
jumping off the position
\begin{equation}  \label{x_n}
x_n=\chi L
\end{equation}
at time
\begin{equation}  \label{t_n}
t_n=\omega L
\end{equation}
where $L$ is a large parameter. In the limit $L \to \infty$ the random
variable $\omega$ becomes deterministic, in a way related to the variables $%
\gamma$ and $\nu$.

According to the law of large numbers for stochastic particle dynamics \cite{Rost,Sepp}
the density of particles $\rho(x,t)$ solves the continuity equation
\begin{equation}  \label{euler}
\frac{\partial \rho(x,t)}{\partial t}+ \frac{\partial j(x,t)}{\partial x}=0,
\end{equation}
where $j(x,t)$ is a stationary state current, which for the TASEP with
backward sequential update is given by
\begin{equation}  \label{current}
j=\frac{p \rho(1-\rho)}{1-p \rho}.
\end{equation}
Solving (\ref{euler}), (\ref{current}) with the initial condition $\rho(x,0)=%
\mathbbm{1}(-x)$ we get the macroscopic density profile
\begin{equation}
\rho(x,t)=\left\{
\begin{array}{ll}
1, & x/t<p/(p-1) \\
\frac{1}{p}\left(1-\sqrt{\frac{q}{1-x/t}}\right), & p/(p-1) \leq x/t < p \\
0, & x/t\geq p.%
\end{array}
\right.
\end{equation}

At a given time step the number $n$ and the coordinate $x_n$ of a particle are related via the density
\begin{equation}
n=\int_{x_n}^\infty \rho(x,t)dx.
\end{equation}
Under the scaling (\ref{n})-(\ref{t_n}) we arrive at the relation
\begin{equation}
\omega=\frac{1}{p}\left(\sqrt{q\nu}+\sqrt{\chi+\nu}\right)^2,
\end{equation}
which holds when $p/(p-1)\leq\chi / \omega \leq p$. For a rigorous proof of this relation one can
use \cite{RS} and note that in our case the final positions of particles are fixed as $x_n=x+N-n$. Hence the
ratio
\begin{equation}
\gamma\equiv \frac{x+N}{L},
\end{equation}
is fixed as $L$ goes to infinity and therefore
\begin{equation}
\gamma=\chi+\nu.
\end{equation}
This gives the function \cite{RS}
\begin{equation}  \label{omega(nu)}
\omega(\nu)=\frac{1}{p}\left(\sqrt{q\nu}+\sqrt{\gamma}\right)^2,
\end{equation}
where $\nu\geq 0,\; \gamma \geq 0$, which describes the most probable time the $n$-th particle jump off the site
$x_n$.

Under the scaling limit of correlation functions we understand the correlations between the jump off times for
particles with the numbers close to a certain point on the deterministic scale, being apart from each other on
the scale characteristic for the KPZ class where non-trivial scaling behaviour is expected. One chooses
$|n_i-n_j|\sim L^{2/3}$. Then the dominating scale of time fluctuations is $\delta t_i \sim L^{1/3}$. This
scale defines the domain, where main contribution into the sum (\ref%
{fredholm1}) comes from.

A rigorous analysis consists of several steps. One has to prove that the kernel converges uniformly on bounded
sets to its scaling limit and that the part of the sum (\ref{fredholm1}) coming from the complement to these sets
is negligible. We do not follow this program here and limit ourselves to a simple saddle point analysis,
outlining a sketch of the proof of the kernel convergence to the $\mathrm{Airy}_2$ kernel. Rigorous mathematical
details for similar cases can be found in \cite{BFP,BorodinFerrari}.

\begin{lemma}
\label{scaling limit} Let us introduce scaling variables such that
\begin{eqnarray}
n_i&=&[L+u_i L^{2/3}] \\
\tau_i&=&[L \omega(1+u_i L^{-1/3})+L^{1/3}s_i]
\end{eqnarray}
with $u_i,s_i,\; i=1,2$ being fixed as $L\to\infty$, and let $\tilde{K} = UKU^{-1}$ with kernel $K$ given by
Proposition \ref{cor_kernel} and
\begin{equation}
U_{i,j}(s_i,s_j)=e^{-(Nf_{\nu_i}(w_i)+N^{1/3}s_i g(w_i))}\delta_{i,j}\delta(s_i-s_j).  \label{U}
\end{equation}
Then
\begin{equation}
\lim_{L\to\infty} L^{1/3} \tilde{K}(n_1,\tau_1;n_2,\tau_2) = \kappa_t K_{%
\mathrm{Airy}_2} (\kappa_h u_1,\kappa_t s_1; \kappa_h u_2,\kappa_t s_2),
\end{equation}
where in the r.h.s. we have the extended Airy kernel,
\begin{eqnarray}
&K_{\mathrm{Airy}_2} &(\xi_1,\zeta_1; \xi_2,\zeta_2) \\
&&=\left\{
\begin{array}{ll}
\int_0^\infty d\lambda e^{\lambda(\xi_2-\xi_1)}\mathrm{Ai}(\lambda+\zeta_1)%
\mathrm{Ai} (\lambda+\zeta_2), & \xi_2\leq\xi_1 \\
-\int_{-\infty}^0 d\lambda e^{\lambda(\xi_2-\xi_1)}\mathrm{Ai}%
(\lambda+\zeta_1)\mathrm{Ai} (\lambda+\zeta_2), & \xi_2 > \xi_1%
\end{array}%
\right. ,  \notag
\end{eqnarray}
and nonuniversal scaling constants
\begin{eqnarray}
\kappa_h= \frac{ q^{1/6}\gamma^{1/3}}{2(1+\sqrt{q\gamma})^{1/3} (\sqrt{\gamma%
}+\sqrt{q})^{1/3}}  \label{kappa_h} \\
\kappa_t= \frac{p  q^{-1/6}\gamma^{1/6}}{(1+\sqrt{q\gamma})^{2/3} (\sqrt{%
\gamma}+\sqrt{q})^{2/3}} .  \label{kappa_t}
\end{eqnarray}
\end{lemma}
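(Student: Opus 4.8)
The plan is to carry out a saddle-point (steepest-descent) analysis of the double-contour representation (\ref{cor_kernel}) of the kernel. First I would substitute the scaling variables into the two pieces of $K(n_1,\tau_1;n_2,\tau_2)$ and write the $w$-dependent part of the main (double-integral) term in the exponential form $\exp\{N f_{\nu_1}(w)+N^{1/3}s_1 g(w)+O(1)\}$, and the $v$-dependent part as $\exp\{-N f_{\nu_2}(v)-N^{1/3}s_2 g(v)+O(1)\}$, where $f_\nu(w)=\omega(\nu)\log(1-p\frac{w-1}{w})+\nu\log\frac{w-1}{w}+\gamma\log w$ collects the leading $O(N)$ exponent and $g(w)=\log(1-p\frac{w-1}{w})$ is the coefficient of the $L^{1/3}s$ fluctuation. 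The diagonal conjugation $U$ of (\ref{U}) is precisely designed to strip off the leading exponential factors $e^{Nf_{\nu_i}(w_i)+N^{1/3}s_i g(w_i)}$ evaluated at the common saddle, so that $\tilde K=UKU^{-1}$ has a finite pointwise limit.

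Second, I would locate the saddle. Differentiating $f_\nu$, I expect that the choice (\ref{omega(nu)}) of $\omega(\nu)$ makes the saddle \emph{double}, i.e.\ $f'_\nu(w_c)=f''_\nu(w_c)=0$ at a single point $w_c$ with $f'''_\nu(w_c)\neq 0$. This coalescence of two simple saddles is the structural source of Airy behaviour: the $O(L^{2/3})$ offset $u_i$ in $n_i$ shifts the effective saddle by $O(L^{-1/3})$, so the natural local variable is $w=w_c+\xi\,(c_0 L^{-1/3})$. Expanding the action to third order, the quadratic term vanishes and one is left with a cubic $\tfrac{1}{3}\eta^3$ together with linear terms proportional to $u_i$ (from the shift) and $s_i$ (from $g$). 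The prefactor $c_0$ is fixed by $f'''_\nu(w_c)$, and this is what ultimately produces the nonuniversal constants (\ref{kappa_h}) and (\ref{kappa_t}): $\kappa_t$ from the cubic coefficient together with the coefficient of $s$, and $\kappa_h$ from the coefficient of $u$.

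Third, after rescaling both $w$ and $v$ around $w_c$ I would turn the localized integrals into Airy functions using $\mathrm{Ai}(\zeta)=\frac{1}{2\pi i}\int e^{\eta^3/3-\zeta\eta}\,d\eta$, and decouple the residual factor $\frac{1}{w-v}$ by the representation $\frac{1}{w-v}=\int_0^\infty e^{-\lambda(w-v)}d\lambda$, valid for the appropriate relative position of the contours. This converts the double integral into $\int_0^\infty d\lambda\, e^{\lambda(\xi_2-\xi_1)}\mathrm{Ai}(\lambda+\zeta_1)\mathrm{Ai}(\lambda+\zeta_2)$, which is the $\xi_2\le\xi_1$ branch of the extended kernel. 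The remaining $\xi_2>\xi_1$ branch comes from the convolution correction term $-\mathbbm{1}(n_2>n_1)\oint\frac{p\,dz}{2\pi i z^2}(\cdots)$ of (\ref{cor_kernel}): under the same scaling this single integral localizes to $-\int_{-\infty}^0 d\lambda\,e^{\lambda(\xi_2-\xi_1)}\mathrm{Ai}(\lambda+\zeta_1)\mathrm{Ai}(\lambda+\zeta_2)$, completing the piecewise definition of $K_{\mathrm{Airy}_2}$.

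The main obstacle I anticipate is not the single-variable saddle analysis, which is routine once the double saddle is verified, but keeping the two ingredients consistent: the contour $\Gamma_{0,v}$ must enclose every $v\in\Gamma_1$, so when I deform both contours to pass through $w_c$ along the steepest-descent directions I must track whether the pole at $w=v$ is crossed, and the residue picked up there has to combine correctly with the correction term to reproduce exactly the split between the two branches of $K_{\mathrm{Airy}_2}$. Fixing the orientation of the steepest-descent paths (the cubic must be integrated along rays at angles $\pm\pi/3$ so that $\mathrm{Re}\,\eta^3\to-\infty$) and matching the sign conventions in the two regimes is the delicate bookkeeping; the genuinely hard analytic work, namely uniform convergence on bounded sets and the tail estimates controlling the non-local part of the contours, is exactly what the statement defers, and I would only indicate that it follows the arguments of the cited references.
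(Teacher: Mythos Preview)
Your plan is sound and would work, but it differs from the paper's route in one structural respect. For the double-integral part of (\ref{cor_kernel}) you propose to attack the two-dimensional contour integral directly, using the device $\tfrac{1}{w-v}=\int_0^\infty e^{-\lambda(w-v)}\,d\lambda$ to decouple $w$ from $v$ and then recognising each factor as an Airy integral. The paper instead steps back to the representation $\sum_{k\ge 1}\Psi^{n_1}_{n_1-k}(\tau_1)\Phi^{n_2}_{n_2-k}(\tau_2)$, performs a one-dimensional steepest-descent analysis on each of $\Psi$ and $\Phi$ separately (this is where the double critical point $w_0=1+\sqrt{\nu/(q\gamma)}$ and the third-derivative scaling enter), obtains two Airy functions indexed by $r=kN^{-1/3}$, and only then replaces the $k$-sum by an $r$-integral over $[0,\infty)$. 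Your decoupling integral plays exactly the role of the paper's summation variable $r$, so the two routes are equivalent; the paper's version sidesteps the contour-crossing bookkeeping you flag, at the cost of an extra uniformity argument when passing from sum to integral. For the correction term $-\mathbbm{1}(n_2>n_1)\phi^{(n_1,n_2)}$ there is also a difference: the paper does not get $-\int_{-\infty}^0(\cdots)$ directly from this piece, as you suggest. Rather, its saddle is simple (not double), the local expansion is quadratic in $z-w_0$, and the integral evaluates to a Gaussian which is then rewritten via Johansson's identity as $\int_{-\infty}^{\infty} e^{-\lambda(\tau-\tau')}\mathrm{Ai}(\xi+\lambda)\mathrm{Ai}(\xi'+\lambda)\,d\lambda$; subtracting this full-line integral from the $\int_0^\infty$ coming from the double-integral part is what produces the $-\int_{-\infty}^0$ branch. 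Your outline conflates these two steps, so be careful there.
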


\begin{proof}
Suppose
\begin{eqnarray}
n_i &=&[N \nu_i ] \\
\tau_i &=& [N \omega(\nu_i)+N^{1/3}s_i ] \\
x&=&[N(\gamma-1)]
\end{eqnarray}
where
\begin{equation}
\nu_i=\nu+u_i N^{-1/3},
\end{equation}
and $s_i,u_i\in\mathbbm{R}$, $i=1,2$. As the numbers of particles are constrained to be below $N$, formally the
value of $\nu$ should be in the range $0<\nu<1$. However, as it will be clear below, this constraint is
superficial, and can be omitted.

We also introduce the following functions
\begin{eqnarray}
f_{\nu}(w)&=&\omega(\nu)\ln(q+p/w)+\nu \ln(1-1/w)+\gamma \ln(w) \\
g(w)&=&\ln(q+p/w) \\
h(w)&=&\ln(1-1/w)
\end{eqnarray}
The position of the double critical point of the function $f_\nu(w)$, which satisfies $f^{\prime
}_\nu(w_0)=f^{\prime \prime }_\nu(w_0)=0$ is
\begin{equation}  \label{crit point}
w_0(\nu)=1+\sqrt{\frac{\nu}{q \gamma}}.
\end{equation}
To obtain the asymptotics of the double integral part of the kernel, we
recall that it is obtained as a sum $\sum_{k=1}^{\infty}\Psi_{n_1-k}^{n_1}(%
\tau_1)\Phi_{n_2-k}^{n_2} (\tau_2)$. Therefore we first evaluate the integrals for $\Psi_{n_1-k}^{n_1}(\tau_1)$
and $\Phi_{n_2-k}^{n_2}(\tau_2)$ asymptotically as $N\to\infty$ and then perform the summation. In terms of above
notations the integrals read
\begin{eqnarray}
\Psi^{n_1}_{n_1-k}(\tau_1))&=&\oint_{\Gamma_{0,1}}\frac{dw}{2\pi\mathrm{i}
w^2} e^{N f_{\nu_1}(w)+N^{1/3}\left(s_1 g(w)+r h(w)\right)} \\
\Phi^{n_2}_{n_2-k}(\tau_1))&=&\oint_{\Gamma_{1}}\frac{p dw }{2\pi \mathrm{i}}%
\frac{e^{-N f_{\nu_2}(w)-N^{1/3}(s_2 g(w)+r h(w))}}{ (w-1)(qw+p)}
\end{eqnarray}
where $r=k N^{-1/3}$.

Instead of the functions in the exponents we use their Taylor expansion at the double critical points $w_i\equiv
w_0(\nu_i)$, with $i=1,2$ where the main contribution to the integral comes from for large $N$. To lowest
non-vanishing order we have
\begin{eqnarray}
f_{\nu_i}(w)& = &f_{\nu_i}(w_i)+\frac{1}{6}f^{^{\prime \prime \prime
}}_{\nu}(w_0)(w-w_i)^3 + \dots \\
g(w)& = &g(w_i)+g^{\prime }(w_0)(w-w_i) + \dots \\
h(w)& = &h(w_i)+h^{\prime }(w_0)(w-w_i) + \dots
\end{eqnarray}
where in the coefficients of the $w$-dependent terms we, without loss of accuracy, replace $\nu_i$ and $w_i$ by
$\nu$ and $w_0\equiv w_0(\nu)$ respectively. We substitute these expansion into the integrals, and choose
steepest descent contours such that they approach the horizontal axis at the points $w_1$ and $w_2$ at the angles
$\pm \pi/3$ and $\pm 2 \pi/3$
respectively. Changing the integration variables to $\xi_i=(w-w_i)N^{1/3}f^{%
\prime \prime \prime }(w_0)/2$ we arrive at the integrals defining the Airy functions:
\begin{equation}
\mathrm{Ai}(x)= \int_{\infty e^{-\mathrm{i}\pi/3}}^{\infty e^{\mathrm{i}%
\pi/3}} \frac{dx}{2 \pi \mathrm{i}}\exp\left(\frac{x^3}{3}-xz\right).
\end{equation}
As a result we have the asymptotic behaviour for large $N$
\begin{eqnarray}  \label{Phi_airy}
\Psi^{n_1}_{n_1-k}(\tau_1))&\sim &\frac{e^{N f_{\nu_1} (w_1)+N^{1/3}(s_1
g(w_1)+r h(w_1))}}{w_0^2 (Nf_\nu^{^{\prime \prime \prime }}(w_0)/2)^{1/3}} \\
&\times& \mathrm{Ai}\left(\frac{rh^{\prime }(w_0)-s_1 g^{\prime }(w_0)}{%
(f_\nu^{^{\prime \prime \prime }}(w_0)/2)^{1/3}}\right)  \notag \\
\Phi^{n_2}_{n_2-k}(\tau_1))&\sim &\frac{pe^{-N f_{\nu_2}(w_2)- N^{1/3}(s_2 g(w_2)+r
h(w_2))}}{(w_0-1)(qw_0+p)(Nf_\nu^{^{\prime \prime \prime
}}(w_0)/2)^{1/3}} \\
& \times&\mathrm{Ai}\left(\frac{rh^{\prime }(w_0)-s_2 g^{\prime }(w_0)}{%
(f_\nu^{^{\prime \prime \prime }}(w_0)/2)^{1/3}}\right).  \notag
\end{eqnarray}

Finally, the summation over $k$ can be replaced by an integration over $r$. This requires arguments that the
convergence of $\Psi_{n_1-k}^{n_1}(\tau_1)$ and $\Phi_{n_2-k}^{n_2}(\tau_2)$ is uniform on the sets $k<A N^{1/3}$
and terms of the sum for higher $k$ are negligible for $A$ large enough. These arguments are based on the
superexponential decay of Airy functions. To perform the summations we use one more expansion:
\begin{equation}
h(w_i)=h(w_0)-h^{\prime }(w_0)w^{\prime }_0(\nu)u_iN^{-1/3} +O(N^{-2/3}).
\end{equation}
This yields the large $N$ behaviour of the sum as
\begin{eqnarray}
&&\sum_{k=1}^{\infty}\Psi_{n_1-k}^{n_1}(\tau_1)\Phi_{n_2-k}^{n_2}(\tau_2) =
\label{Airy sum} \\
&&N^{-1/3}\widetilde{\kappa}_t e^{({N(f_{\nu_1}(w_1)-f_{\nu_2}(w_2))+
N^{1/3}(s_1 g(w_1)-s_2 g(w_2))})}  \notag \\
&& \times \int_0^\infty d\lambda e^{\lambda \widetilde{\kappa}_h(u_2-u_1)}
\mathrm{Ai}\left(\lambda+\widetilde{\kappa}_ts_1\right) \mathrm{Ai}%
\left(\lambda+\widetilde{\kappa}_ts_2\right) ,  \notag
\end{eqnarray}
where
\begin{eqnarray}
\widetilde{\kappa}_h\!\!\!&=& \!\!\! \frac{w^{\prime }_0(\nu)
f_\nu^{^{\prime \prime \prime }}(w_0)^{1/3}} {2^{1/3}}=\frac{%
\nu^{-2/3}q^{1/6}\gamma^{1/3}}{2(\sqrt{\nu}+ \sqrt{\gamma q})^{1/3}(\sqrt{%
\gamma}+\sqrt{\nu q})^{1/3}} \\
\widetilde{\kappa}_t &=& - \frac{ 2^{1/3} g^{\prime }(w_0)}{f_\nu^{^{\prime
\prime \prime }}(w_0)^{1/3}}= \frac{p \nu^{1/6}q^{-1/6}\gamma^{1/6}}{(\sqrt{\nu%
}+\sqrt{\gamma q})^{2/3} (\sqrt{\gamma}+\sqrt{\nu q})^{2/3}}
\end{eqnarray}

Let us now evaluate the second part of the kernel given by the single integral, which can be written as
\begin{eqnarray}
I=p\oint\frac{dz}{2 \pi \mathrm{i }z^2} e^{N(f_{\nu_1}(z)-f_{\nu_2}(z))+ N^{1/3}(s_1-s_2)g(z)}
\end{eqnarray}
The critical point of the exponentiated function is found to be
\begin{equation}
z_c=w_0\equiv w_0(\nu)
\end{equation}
We use the Taylor expansion around this point to catch the main contribution to the integral. Keeping all terms
up to the relevant order we obtain
\begin{eqnarray}
f_{\nu_1}(z)-f_{\nu_2}(z) & = & f_{\nu_1}(w_1)-f_{\nu_2}(w_2) \\
&+& \frac{f^{\prime \prime \prime }(w_0)(u_2^3-u_1^3)w_0^{\prime 3}}{6 N}
\notag \\
&+&\frac{f^{\prime \prime \prime }(w_0)(u_1^2-u_2^2)w_0^{\prime 2}}{2 N^{2/3}%
}(z-w_0)  \notag \\
&+&\frac{f^{\prime \prime \prime }(w_0)(u_2-u_1) w_0^{\prime }(\nu)}{2 N^{1/3}}(z-w_0)^2 + \dots  \notag
\end{eqnarray}
and
\begin{eqnarray}
(s_1-s_2)g(z) & = & s_1 g(w_1)-s_2g(w_2) \\
&+&(u_2 s_2-u_1 s_1)N^{-1/3}g^{\prime }(w_0)w^{\prime }(\nu)  \notag \\
&+&(s_1-s_2)g^{\prime }(w_0)(z-w_0) + \dots  \notag
\end{eqnarray}
Substituting these expansions into the integral and integrating along the vertical line crossing the horizontal
axis at $w_0$ we obtain:
\begin{eqnarray}
I=N^{-1/3}\widetilde{\kappa}_t e^{{N(f_{\nu_1}(w_1)-f_{\nu_2}(w_2))+
N^{1/3}(s_1 g(w_1)-s_2 g(w_2))}} \\
\frac{e^{\frac{\widetilde{\kappa}_h^3(u_2^3-u_1^3)}{3}- \frac{(\widetilde{%
\kappa}_h^2(u_1^2-u_2^2) -\widetilde{\kappa}_t(s_1-s_2))^2}{4\widetilde{%
\kappa}_h(u_2-u_1)}-\widetilde{\kappa}_h \widetilde{\kappa}_t(s_2u_2-s_1u_1)}%
}{\sqrt{4\pi\widetilde{\kappa}_h(u_2-u_1)}}.
\end{eqnarray}

One can see that the first line of this expression exactly coincides with the factor before the integral in
(\ref{Airy sum}). Furthermore, its exponential part does not change the value of the determinants, so that it
can be omitted. The second part can be rewritten using the formula from \cite%
{johansson2}
\begin{eqnarray}
 \frac 1{\sqrt{4\pi (\tau'-\tau)}}
  e^{-(\xi-\xi')^2/4(\tau'-\tau)
  -(\tau'-\tau)(\xi+\xi')/2+(\tau'-\tau)^3/12}\\=\int_
{-\infty}^\infty e^{-\lambda(\tau-\tau')} \mathrm{Ai}(\xi+\lambda)\mathrm{Ai}(\xi'+\lambda)d\lambda,
\end{eqnarray}
where we should set $\tau=\widetilde{\kappa}_h u_1,\tau^{\prime }=\widetilde{%
\kappa}_h u_2, \xi=\widetilde{\kappa}_t s_1,\xi^{\prime }=\widetilde{\kappa}%
_t s_2$. Notice that there is a common factor
\begin{equation}
A_{1,2}=\exp(N(f_{\nu_1}(w_1)-f_{\nu_2}(w_2))+N^{1/3}(s_1g(w_1)-s_2g(w_2)) \label{A_{1,2}}
\end{equation}
in (6.38) and (6.45). Multiplication of the kernel by the inverse of that factor is equivalent to a similarity
transformation $UKU^{-1}$ of the operator $K$ with $U$ given by (\ref{U}).

The similarity transformation does not affect the determinants \ $%
\det\{K(u_i,s_i;u_j,s_j)\}_{1\leq i,j\leq n}$ and hence the resulting Fredholm determinant remains unchanged. In
addition, after this multiplication the limit of the kernel multiplied by $N^{1/3}$ is well defined. As a result,
we obtain
\begin{equation}
\lim_{t\to\infty}N^{1/3}A_{1,2}^{-1}((6.38) + (6.45))=\widetilde{\kappa}_t
K_{\mathrm{Airy}_2}(\widetilde{\kappa}_h u_1,\widetilde{\kappa}_t s_1;%
\widetilde{\kappa}_h u_2,\widetilde{\kappa}_t s_2).
\end{equation}

Finally, we set the large parameter
\begin{equation}
L=\nu N
\end{equation}
which removes the dependence of $\widetilde{\kappa}_h,\widetilde{\kappa}_t$
on $\nu$ resulting in $\kappa_h,\kappa_t$ shown in (\ref{kappa_h}), (\ref%
{kappa_t}). The same result could be obtained by setting $\nu=1$ in the very beginning. The independence of the
result on $\nu$ follows from the fact that the dynamics of a particle is independent on particles behind it,
which is specific of the TASEP.
\end{proof}

The constant $\kappa_t$ has occurred before in \cite{Johansson,RS}. The Lemma together with the steps mentioned
above would result in the following theorem describing the universal behaviour of the rescaled process

\begin{theorem}\label{Airy_2}
The following limit holds in a sense of finite-di\-men\-sion\-al distributions:
\begin{equation}
\lim_{L \to \infty} \frac{t_{L+u L^{2/3} }-L\omega(1 +u L^{-1/3})}{%
L^{1/3}}=\kappa_t \mathcal{A}_2(\kappa_h u),
\end{equation}
where $\mathcal{A}_2$ is the $Airy_2$ process characterized by multipoint distributions:
\begin{eqnarray}
\mathrm{Prob}(\mathcal{A}_2(u_1)<s_1,\dots,\mathcal{A}_2(u_m)<s_m)  \notag \\
=\det\left(\mathbbm{1}-\chi_{ s} K_{\mathrm{Airy_2}} \chi_{s})\right)_{L^2(\{n_1,\dots,n_m\}\times \mathbbm{R})}.
\end{eqnarray}
\end{theorem}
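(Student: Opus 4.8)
The plan is to transfer the kernel convergence of Lemma \ref{scaling limit} into convergence of the Fredholm determinant of Theorem \ref{fredholm}, and then to read off the limit from the defining formula for $\mathcal{A}_2$. Fix $m$, the labels encoded by $u_1,\dots,u_m$ and thresholds $\sigma_1,\dots,\sigma_m$, and write $S(u)=(t_n-L\omega(1+uL^{-1/3}))/L^{1/3}$ for the rescaled jump-off time of the particle with index $n=[L+uL^{2/3}]$. The event $\{S(u_i)\le\sigma_i\}$ is exactly $\{t_{n_i}\le a_i\}$ with $a_i=[L\omega(1+u_iL^{-1/3})+L^{1/3}\sigma_i]$, i.e. the cutoffs of Theorem \ref{fredholm} taken at $s_i=\sigma_i$ in the scaling of Lemma \ref{scaling limit}. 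Hence $\mathrm{Prob}(S(u_1)\le\sigma_1,\dots,S(u_m)\le\sigma_m)=\det(\mathbbm{1}-\chi_a K\chi_a)$, and the entire task reduces to showing that this determinant, computed on $l^2(\{n_1,\dots,n_m\}\times\mathbb{Z}_{\geq x})$, converges to the Fredholm determinant of the extended Airy kernel on $L^2(\{u_1,\dots,u_m\}\times\mathbb{R})$. Because the limiting multipoint distribution of $\mathcal{A}_2$ is continuous, establishing this for every choice of thresholds is equivalent to convergence of all finite-dimensional distributions, which is the assertion of the theorem.

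First I would remove the growing $L$-dependent prefactors. Lemma \ref{scaling limit} does not control $K$ itself but the conjugated kernel $\tilde{K}=UKU^{-1}$, with $U$ the diagonal operator (\ref{U}). Since $U$ is a multiplication operator in both the level and the time variable, it commutes with the projections $\chi_a$, so $\chi_a\tilde{K}\chi_a=U(\chi_a K\chi_a)U^{-1}$ and $\det(\mathbbm{1}-\chi_a K\chi_a)=\det(\mathbbm{1}-\chi_a\tilde{K}\chi_a)$; I may therefore work with $\tilde{K}$ throughout. Next I would turn the sums in the series (\ref{fredholm1}) into integrals. Under $\tau=[L\omega(1+uL^{-1/3})+L^{1/3}s]$ the variable $s$ runs over a grid of spacing $L^{-1/3}$, so each inner sum $\sum_{\tau>a_i}$ is a Riemann sum equal to $L^{1/3}\int_{\sigma_i}^\infty ds\,(1+o(1))$, while Lemma \ref{scaling limit} gives $\tilde{K}=L^{-1/3}\kappa_t K_{\mathrm{Airy}_2}(1+o(1))$. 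In the $n$-th term of (\ref{fredholm1}) the $n$ factors $L^{1/3}$ coming from the sums cancel the $L^{-n/3}$ coming from the $n\times n$ determinant, so that term converges to the $n$-th term of the Fredholm expansion of $\det(\mathbbm{1}-\chi_\sigma\,\kappa_t K_{\mathrm{Airy}_2}\,\chi_\sigma)$.

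The hard part is to justify exchanging the limit $L\to\infty$ with the infinite sum over $n$ and with the now unbounded $s$-integrations, i.e. to upgrade this termwise statement to convergence of the full determinant. Three ingredients are needed. First, the pointwise saddle-point estimate of Lemma \ref{scaling limit} must be strengthened to uniform convergence of $L^{1/3}\tilde{K}$ on compact sets of $(u,s)$, which follows from a uniform steepest-descent analysis of the integrals for $\Psi^{n}_{n-k}$ and $\Phi^{n}_{n-k}$ with explicit control of the subleading orders. Second, one needs a summable domination: off the critical point the deformed contours give a Gaussian-type bound on $L^{1/3}\tilde{K}$ that decays superexponentially in the rescaled time variables, mirroring the $\exp(-c\,s^{3/2})$ tail of the Airy kernel, and together with Hadamard's inequality for the $n\times n$ minors this makes both the tail of the series in $n$ and the tail of the $s$-integration uniformly negligible, so that dominated convergence applies. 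Third, the trace-class property of $\chi_a K\chi_a$ assumed in Theorem \ref{fredholm} must be in force. All three are standard but technical; I would carry them out exactly as in the closely parallel analyses of \cite{BFP,BorodinFerrari} rather than reproduce them, which is precisely why Lemma \ref{scaling limit} was stated only at the level of the saddle point.

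It remains to identify the limit. The linear change of variables rescaling the time argument by $\kappa_t$ and the level by $\kappa_h$ turns the limiting series into the Fredholm expansion of $\det(\mathbbm{1}-\chi K_{\mathrm{Airy}_2}\chi)$ for the bare extended Airy kernel, the prefactor $\kappa_t$ being absorbed by the Jacobian of the $s$-integrations. By the defining formula for the Airy$_2$ process given in the statement, this Fredholm determinant equals $\mathrm{Prob}(\kappa_t\mathcal{A}_2(\kappa_h u_1)\le\sigma_1,\dots,\kappa_t\mathcal{A}_2(\kappa_h u_m)\le\sigma_m)$. Since $m$ and the points $u_i,\sigma_i$ were arbitrary, this establishes convergence of all finite-dimensional distributions, proving the theorem.
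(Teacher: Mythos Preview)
Your proposal is correct and follows exactly the route the paper itself indicates. In fact, the paper does not give a proof of Theorem \ref{Airy_2} at all: it explicitly states that only the saddle-point analysis of the kernel (Lemma \ref{scaling limit}) is carried out, that ``a rigorous analysis consists of several steps'' (uniform convergence on bounded sets, negligibility of the tails in (\ref{fredholm1})), and that ``rigorous mathematical details for similar cases can be found in \cite{BFP,BorodinFerrari}.'' Your outline --- conjugate by $U$ to kill the growing factors, pass from sums to Riemann integrals with the $L^{\pm 1/3}$ cancellation, and then invoke uniform kernel bounds plus Hadamard/dominated convergence as in \cite{BFP,BorodinFerrari} --- is precisely this program, spelled out in somewhat more detail than the paper provides.
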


\section{Conclusion and perspectives}

In the present work we studied multi-point correlations in the discrete time TASEP with backward sequential
update. Using the GGF, we have extended the range of analysis
to  space-time configurations, which were not covered by the previous studies.
With the help of the GGF  and the notion of boundaries one can describe
the joint probabilities of both the arrivals
at given sites  by given times for some particles and the hoppings from  given sites at given times for the others.
The arrangement of points in admissible configurations, dealt with by the GGF,
is defined by the strictly decreasing  order of space coordinates, (\ref{x order}),
 and weakly increasing order of time coordinates, (\ref{t order}).
 The latter is exactly opposite
 to the definition of space-like ordering accepted in    \cite{BorodinOlshanski,BFS,BorodinFerrari}. As such,
 it extends the set of space-time configurations, where the probability measure having a compact
 determinantal form can be used as a starting point for calculation of correlation functions.
 However, the  constraint (\ref{x order}) is an additional condition, which makes the set of
 point configurations living on the boundaries narrower than the whole set of time-like configurations
 complementary to the set of space-like  ones.

 We considered the simplest example of the use of the GGF for the
 calculation of correlation functions. This is the case when the boundary
 consists of consecutive vertical lines in the space-time plane
 and the quantity calculated is the joint probability for selected particles
 to jump from corresponding lines by given time moments.
The expression for the kernel (\ref{cor_kernel}) we derived in Section 5 should be compared with that obtained in
\cite{ImamSasamoto} for the dynamics of a tagged particle in the TASEP with the step initial conditions. Indeed,
in the case of strictly ordered times $t_1<t_2<\dots<t_m$ , all endpoints in the joint distribution correspond to
reference points of a hole moving in the opposite direction.
For the TASEP with parallel update where the
particle-hole symmetry holds, the hole can be considered as the tagged particle.
Therefore the difference in the kernels must be attributed to the difference in the updates,
parallel and backward sequential.
One also notices a partial similarity  between the auxiliary
determinantal weights obtained in our case, (\ref{det_measure}), and proposed in \cite{ImamSasamoto} for the case of
current correlations in the TASEP with parallel update (see formula (6.7) of \cite{ImamSasamoto} ).
They have similar structure of products of Schur functions, except one term which seems to have more complicated
form in our case. It was also communicated to us by A. Borodin that the auxiliary determinantal
measure (\ref{det_measure}) is a
particular case of multi-parametric family of measures studied in \cite{BorodinPeche}.
One of the questions to answer is how these parameters can be incorporated into the particle dynamics.

In full generality, the method based on the GGF can be applied to obtain
the correlation functions on more general sets.
There are several ways for generalization. One can obtain an auxiliary determinantal process
associated to an arbitrary boundary $\mathcal{B}$.
Then, one can try to construct a cascade-like determinantal process similar to that in
\cite{BorodinFerrari}  using a series of subsequent boundaries. At simplest, these could be
the fixed coordinate boundaries studied here:
\begin{equation}  \label{bound_k}
\bm{\mathcal{B}}^{(j)}=\bigcup_{i=1}^N\{(x^{(j)}-i+N,t_i):t_i\in\mathbb{Z}\}_i%
\end{equation}
with $x^{(1)}<x^{(2)}<\dots$, the upper index $j$ counting the number of points within the correlation functions.
The first method would work with general admissible configurations of different particles,
while the second would allow to relax time ordering constraint (\ref{t order}) and
to consider several space-time points for a single particle.
Another generalization of interest is an analysis of the TASEP with different initial conditions.
Note that the  GGF allows varying not only initial space positions of particles but also the
 time moments when the particles enter the dynamics.

The asymptotic analysis of the kernel (\ref{cor_kernel}) we obtained
shows that current correlation functions converge to that of the ${Airy}_2$ process.
This result agrees with the slow decorrelation
 \cite{Ferrari} observed in the models of the KPZ class in
$1+1$ dimensions. For the same reason we expect that a similar result also holds for general
boundaries and for the configurations accessible
for multi-cascade analysis. The most intriguing problem is the correlation along the characteristic lines, where
they are expected to be different from standard KPZ limiting processes.
A primary analysis shows that the characteristic lines stay beyond the domain satisfying the $x$-ordering constraint
(\ref{x order}). Therefore the technique used here has to be sufficiently modified to advance in this direction.

\section{Acknowledgements}
The authors are grateful to Tony Dorlas for his help in writing Section 5. The authors thank
Patrik Ferrari and Alexei Borodin for stimulating discussions.
This work was supported by the RFBR grants 07-02-91561a, 09-01-00271a and
the DFG grant 436 RUS 113/909/0-1(R). AMP thanks Dublin Institute for Advanced Studies for hospitality.

\appendix

\section{Cyclic structure of the de\-ter\-mi\-nants $G_0(x,y)$ and $G_1(x,y)$\label%
{cycles appendix}}

\begin{lemma}
\label{cycles} Let $\sigma=(\sigma_1,\dots,\sigma_N)\in S_N$ be a permutation, $\bm{x}$ and $\bm{y}$ be two
particle configurations from the range (\ref{range}) and
\begin{equation}
\mathcal{F}_\sigma=\prod_{i=1}^N F_{\sigma_i-i}(x_i-y_{\sigma_i},t) . \label{F_sigma}
\end{equation}

\begin{enumerate}
\item Then, if $t=0$, $\mathcal{F}_\sigma=0$ unless $\sigma$ is the identity
permutation,
\begin{equation}
\sigma=(1,\dots,N)
\end{equation}
and
\begin{equation}
\bm{x}=\bm{y}.
\end{equation}

\item Let $t=1$, and $(i_1,\dots,i_k)$ be a cycle consisting of $k>1$
elements of $\sigma$, such that $\sigma_{i_{1}}=i_{2},\sigma _{i_{2}}=i_{3},\dots ,\sigma _{i_{k}}=i_{1}$. Then,
$\mathcal{F}_\sigma=0$ unless up to a cyclic shift of indices
\begin{equation}
i_2=i_1+1,\dots,i_k=i_1+k-1
\end{equation}
and
\begin{equation}
y_{i_1}=x_{i_1},x_{i_2}=y_{i_2}=x_{i_1}-1,\dots,x_{i_k}=y_{i_k}=x_{i_1}-k+1.
\end{equation}
\end{enumerate}
\end{lemma}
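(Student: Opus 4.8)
The plan is to read off the support of $F_n(x,t)$ at the two relevant time slices and then run an ordering argument on the permutation. From the integral representation (\ref{FDisc}) one evaluates $F_n(x,0)$ by the residue at $w=0$: it equals $\delta_{x,0}$ for $n=0$, vanishes for $x>0$ for every $n$, and for $n\le 0$ vanishes also for $x<n$. Using the recurrence (\ref{FIden1}), $F_n(\cdot,1)=qF_n(\cdot,0)+pF_n(\cdot-1,0)$ inherits the corresponding boundaries $x>1$ and (for $n\le 0$) $x<n$. I package these as two rules valid at $t\in\{0,1\}$: (i) $F_n(x,t)=0$ for $x>t$; (ii) $F_n(x,t)=0$ for $x<n$ when $n\le 0$. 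The whole argument is then a bookkeeping of (i), (ii) against the strict orderings $x_1>\dots>x_N$ and $y_1>\dots>y_N$ from (\ref{range}).

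For the first statement ($t=0$) I would linearize the orderings by setting $u_i=x_i+i$ and $v_j=y_j+j$, which turns strict decrease into weak decrease, $u_1\ge\dots\ge u_N$ and $v_1\ge\dots\ge v_N$. A nonzero factor $F_{\sigma_i-i}(x_i-y_{\sigma_i},0)$ in (\ref{F_sigma}) then forces, via (i) and (ii), that $u_i<v_{\sigma_i}$ if $\sigma_i>i$, that $u_i=v_i$ if $\sigma_i=i$, and that $u_i\ge v_{\sigma_i}$ if $\sigma_i<i$. Assume $\sigma\neq\mathrm{id}$ and let $i^*$ be the smallest moved index; since $\sigma$ fixes everything below $i^*$ it permutes $\{i^*,\dots,N\}$, whence $\sigma_{i^*}>i^*$ and also $j^*:=\sigma^{-1}(i^*)>i^*$. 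The rules give $u_{i^*}<v_{\sigma_{i^*}}\le v_{i^*}$ and $u_{j^*}\ge v_{\sigma_{j^*}}=v_{i^*}$, while weak monotonicity gives $u_{j^*}\le u_{i^*}$; chaining these yields $v_{i^*}\le u_{j^*}\le u_{i^*}<v_{i^*}$, a contradiction. Hence $\sigma=\mathrm{id}$, and $\mathcal F_{\mathrm{id}}=\prod_i\delta_{x_i,y_i}=\delta_{\bm x,\bm y}$.

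For the second statement ($t=1$) I would argue one nontrivial cycle at a time, by induction on its length $k$. In a cycle every step $i_j\to i_{j+1}$ has $n_j:=i_{j+1}-i_j\neq0$; rule (i) gives $x_{i_j}-y_{i_{j+1}}\le1$ on every step, and on the down-steps ($n_j<0$) rule (ii) gives in addition $x_{i_j}-y_{i_{j+1}}\ge n_j$. The decisive local step is a squeeze at the largest index $M=\max I$, which carries exactly one incoming up-step (from $p'=\sigma^{-1}(M)<M$) and one outgoing down-step (to $s'=\sigma_M<M$). Combining their two inequalities with the strict decrease of $x$ and $y$ — using $x_{p'}-x_M\ge M-p'$ and $y_{s'}-y_M\ge M-s'$ — squeezes $0\le x_M-y_M\le 1-(M-p')$, which pins every inequality to an equality: $x_M=y_M$, the predecessor of $M$ is exactly $M-1$ with $x_{M-1}=x_M+1$, and the $y$-values on $[s',M]$ are consecutive. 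I then splice $M$ out of the cycle and apply the induction hypothesis to the resulting $(k-1)$-cycle on $I\setminus\{M\}$; gluing the peeled vertex back shows that $I$ is the interval $\{i_1,\dots,i_1+k-1\}$, that $\sigma$ is the shift $i_1\to i_1+1\to\dots\to i_1+k-1\to i_1$, and that $x_{i_j}=y_{i_j}=x_{i_1}-(j-1)$.

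The delicate point, on which I would spend the most care, is that the induction must preserve non-vanishing of the product, and here passing the support rules (i), (ii) is only necessary, not sufficient: $F_n(x,1)$ can vanish in the interior of its support (already $F_{-1}(0,1)=q-p$ vanishes at $p=\tfrac12$), so one cannot reduce to a smaller cycle merely by checking inequalities for the spliced step. This is precisely why the squeeze at $M$ is needed: at the forced values the spliced step becomes $F_{n'}(n',1)$ with $n'=s'-(M-1)\le0$ and argument exactly $x_{M-1}-y_{s'}=n'$, i.e. it sits on the lower boundary $x=n'$ of the support, where $F_{n'}(n',1)=(-1)^{-n'}q\neq0$. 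Thus the reduced product is genuinely nonzero and the induction hypothesis legitimately applies. The base case $k=2$ is the same squeeze carried out directly, reproducing $\sigma_{i_1}=i_1+1$, $\sigma_{i_1+1}=i_1$ with $x_{i_1}=y_{i_1}$ and $x_{i_1+1}=y_{i_1+1}=x_{i_1}-1$.
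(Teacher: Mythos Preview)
Your proof is correct and rests on the same ingredients as the paper's --- the support rules for $F_n(x,t)$ at $t=0,1$, an extremal index in a cycle, and a squeeze of the resulting inequalities against the strict ordering of $\bm x,\bm y$ --- but the bookkeeping is organized differently in both parts.

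For $t=0$ the paper picks a nontrivial cycle and its \emph{smallest} index $i_r$, then combines (\ref{i>sigma_ineq}) and (\ref{i<sigma_ineq}) with the range constraints to get $y_{i_r}\le x_{i_r}\le y_{i_{r+1}}$, contradicting $i_r<i_{r+1}$. Your linearization $u_i=x_i+i$, $v_j=y_j+j$ is a tidy repackaging of the same inequalities; choosing the smallest \emph{moved} index globally (rather than within a cycle) is a harmless variant that yields the same contradiction.

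For $t=1$ the paper again anchors at the smallest index $i_r$ and then \emph{iterates} forward along the cycle, showing step by step that $i_{r+1}=i_r+1$, $i_{r+2}=i_r+2$, etc., always working with the original factors. You instead anchor at the \emph{largest} index $M$, peel it off, and run an induction on the cycle length. This forces you to confront a subtlety the paper's direct iteration avoids: the spliced factor $F_{n'}(x_{M-1}-y_{s'},1)$ must be shown nonzero so that the induction hypothesis applies, and mere membership in the support does not guarantee this (your $F_{-1}(0,1)=q-p$ example is apt). You handle it correctly by observing that the squeeze pins the spliced argument to the boundary value $x=n'$, where $F_{n'}(n',1)=(-1)^{-n'}q\neq 0$. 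Both routes arrive at the same conclusion; the paper's iteration is shorter, while your induction is more modular and makes the role of the nonvanishing condition explicit.
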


\begin{proof}
We first note that
\begin{equation}
F_n(x,t)=0  \label{F_n=0}
\end{equation}
unless
\begin{equation}
n \leq x \leq t \,\,\,\mathrm{for}\,\,\,n\leq 0  \label{n<0}
\end{equation}
and
\begin{equation}
x \leq t \,\,\,\mathrm{for}\,\,\, n>0 .  \label{n>0}
\end{equation}

For $\mathcal{F}_\sigma$ to be nonzero all the factors in the r.h.s. of (\ref%
{F_sigma}) have to be nonzero simultaneously. Then, from (\ref{n<0}), (\ref%
{n>0}) the following inequalities hold:
\begin{eqnarray}
\sigma_{i}-i \leq x_i-y_{\sigma_i}\leq t,\qquad &\mathrm{for\quad }& \sigma
_{i}-i\leq 0  \label{i>sigma_ineq} \\
x_i-y_{\sigma_i} \leq t,\qquad &\mathrm{for\quad }& \sigma _{i}-i>0. \label{i<sigma_ineq}
\end{eqnarray}%
Moreover, the particle coordinates vary within the domain (\ref{range}), which implies
\begin{eqnarray}
i-\sigma _{i} &\leq &x_{\sigma _{i}}-x_{i},\,i-\sigma _{i}\leq y_{\sigma_{i}}-y_{i},\, \mathrm{for\quad }
\sigma_{i}-i\leq 0,
\label{i>sigma_range} \\
\sigma _{i}-i &\leq &x_{i}-x_{\sigma _{i}},\,\sigma _{i}-i\leq y_{i}-y_{\sigma _{i}},\, \mathrm{for\quad }\sigma
_{i}-i>0. \label{i<sigma_range}
\end{eqnarray}%
Comparing (\ref{i>sigma_ineq}) and (\ref{i>sigma_range}) we obtain for $%
(i\geq\sigma _{i})$
\begin{equation}
y_i\leq x_i,\,\,y_{\sigma _{i}}\leq x_{\sigma _{i}},  \label{y_sigma<x_sigma}
\end{equation}%
while from (\ref{i<sigma_ineq}) and (\ref{i<sigma_range}) for the case $%
(i<\sigma_i)$ we have
\begin{equation}
x_i-y_i<t,\,\,x_{\sigma _{i}}-y_{\sigma _{i}}<t.  \label{y_sigma>x_sigma}
\end{equation}

Consider a cycle consisting of $k>1$ elements, $(i_{1},\dots,i_{k})$.  Below we adopt the convention
$i_{k+1}\equiv i_1$ for $j=1,\dots,N$. Let us choose $r$, $1\leq r\leq k$, such that the number $i_{r}$ is the
smallest of the
numbers $i_{1},\dots ,i_{k}$. It follows that%
\begin{eqnarray}
\sigma _{i_{r-1}} &\equiv&i_{r}<i_{r-1},  \label{i_k>i_k-1} \\
i_{r} &<&i_{r+1}\equiv\sigma _{i_r}.  \label{i_k>i_k+1}
\end{eqnarray}%
Using (\ref{i>sigma_ineq}) and (\ref{y_sigma<x_sigma}) the value of $x_{i_r}$ can be bounded from above and below
respectively:
\begin{equation}
y_{i_r}\leq x_{i_{r}}\leq y_{i_{r+1}}+t.  \label{x bounds}
\end{equation}

\begin{enumerate}
\item Let $t=0$. The inequality (\ref{x bounds}) then yields
\begin{equation*}
y_{i_r}\leq y_{i_{r+1}},
\end{equation*}
which cannot be satisfied together with (\ref{i_k>i_k+1}) within the domain (%
\ref{range}). Thus we conclude that for $t=0$ only trivial cycles with $k=1$ are possible for $\mathcal{F}_
\sigma$ to be nonzero. The permutation
containing only trivial cycles is the identical permutation, $%
\sigma=(1,\dots,N)$. From
\begin{equation}
F_0(x,0)=\delta_{x,0}
\end{equation}
we obtain the first statement of the Lemma.

\item Let $t=1$. First we note that the only way to satisfy (\ref{x bounds})
is to set
\begin{equation}
y_{i_r}=x_{i_r}=y_{i_{r+1}}+1.  \label{y=x=y+1}
\end{equation}
This means that the particles with indices $i_r$ and $i_{r+1}$ are next to each other and are rightmost in the
cycle (and in the cluster), i.e.
\begin{equation}
i_r=i_{r+1}-1.  \label{rightmost}
\end{equation}
Then, applying (\ref{i>sigma_ineq}), (\ref{i>sigma_range} and (\ref{y=x=y+1}%
) to the case $i=i_{r-1}, \sigma_i=i_r$, we obtain
\begin{equation}
x_{i_r}-x_{i_{r-1}}=i_{r-1}-i_r,
\end{equation}
which suggests that all sites between $x_{i_r}$ and $x_{i_{r-1}}$ are occupied, i.e. the particles with indices
$i_r$ and $i_{r-1}$ belong to the same packed cluster of particles of the configuration $\bm{x}$.

As the cluster in the configuration $\bm{x}$ spreads to the left of $x_r$ for at least two sites, we have
\begin{equation}
x_{i_{r+1}}=y_{i_{r+1}}=x_{i_{r}}-1.
\end{equation}
If $i_{r-1}=i_{r+1}$, then $k=2$ and the cycle is completed. If $k>2$ we have
\begin{equation}
i_{r+2} \equiv \sigma_{i_{r+1}}>i_r+1.
\end{equation}
since the particles with indices $i_r$ and $i_{r+1}$ are rightmost in the cluster. We then apply
(\ref{i<sigma_ineq}) and (\ref{y_sigma>x_sigma}), which yields
\begin{equation}
x_{r+2}=x_r-2=y_{r+2}=y_r-2.
\end{equation}
This procedure can be iterated again and again until at $k$-th step we obtain
\begin{equation}
i_{r+k-1}=i_{r-1}.
\end{equation}
The cycle is completed.
\end{enumerate}
\end{proof}

\end{document}